\documentclass[10pt]{amsart}
%%%%%%%%%%%%%%%%%%%%%%%%%%%%%%%%%%%%%%%%%%%%%%%%%%%%%%%%%%%%%%%%%%%%%%%%%%%%%%%%%%%%%%%%%%%%%%%%%%%%%%%%%%%%%%%%%%%%%%%%%%%%%%%%%%%%%%%%%%%%%%%%%%%%%%%%%%%%%%%%%%%%%%%%%%%%%%%%%%%%%%%%%%%%%%%%%%%%%%%%%%%%%%%%%%%%%%%%%%%%%%%%%%%%%%%%%%%%%%%%%%%%%%%%%%%%
\UseRawInputEncoding
\usepackage{amsfonts}
\usepackage{amssymb}
\usepackage{amsmath,amsthm}
\usepackage{amscd}
\usepackage{graphics}
\usepackage{amsmath,amssymb}
\usepackage{pdfsync}
\usepackage{mathrsfs}
\usepackage{stmaryrd}
\usepackage[colorlinks,linkcolor=blue,citecolor=blue,urlcolor=blue]{hyperref}
\newcommand{\rd}{\mathrm{d}}
\newcommand{\beq}{\begin{equation}}
\newcommand{\eeq}{\end{equation}}
\newcommand{\bea}{\begin{eqnarray}}
\newcommand{\eea}{\end{eqnarray}}
\newcommand{\nn}{\nonumber}
\newcommand\noi{\noindent}

\newcommand{\bk}{\begin{cases}}
\newcommand{\ek}{\end{cases}}

\newtheorem{theorem}{Theorem}

\newtheorem{corollary}[theorem]{Corollary}

\newtheorem{definition}[theorem]{Definition}
\newtheorem{example}[theorem]{Example}

\newtheorem{lemma}[theorem]{Lemma}

\newtheorem{proposition}[theorem]{Proposition}
\newtheorem{remark}[theorem]{Remark}

%\usepackage{showkeys}
% modifica dell'interlinea

%%%%%%%%%%%%%%%%%
%
\begin{document}
\author{Piergiulio Tempesta}
\address{Departamento de F\'{\i}sica Te\'{o}rica II, Facultad de F\'{\i}sicas, Universidad
Complutense, 28040 -- Madrid, Spain
 and Instituto de Ciencias Matem\'aticas, C/ Nicol\'as Cabrera, No 13--15, 28049 Madrid, Spain.}
\email{p.tempesta@fis.ucm.es, piergiulio.tempesta@icmat.es}
\author{Giorgio Tondo}
\address{Dipartimento di Matematica e Geoscienze, Universit\`a  degli Studi di Trieste,
piaz.le Europa 1, I--34127 Trieste, Italy.}
\email{tondo@units.it}
\date{June 3, 2021}

\title[Haantjes Algebras of Classical Integrable Systems]{Haantjes Algebras of Classical \\ Integrable Systems}
\begin{abstract}

A tensorial approach to the theory of classical Hamiltonian integrable systems is proposed, based on the geometry of Haantjes tensors. We introduce the class of symplectic-Haantjes manifolds (or $\omega \mathscr{H}$ manifolds), as a natural setting where the notion of integrability can be formulated.
We prove that the existence of suitable Haantjes algebras of (1,1) tensor fields with vanishing Haantjes torsion is a necessary and sufficient condition for a Hamiltonian system to be integrable in the Liouville-Arnold sense.
We also show that new integrable models arise from the Haantjes geometry. Finally, we present an application of our approach to the study of the Post-Winternitz system and of a stationary flow of the KdV hierarchy.

\vspace{2mm}

\end{abstract}

 \maketitle
\tableofcontents

\section{Introduction}
Integrable systems are ubiquitous in many branches of modern mathematics and theoretical physics. Due to their relevance, in the last decades the search for intrinsic mathematical structures underlying the notion of integrability has been actively pursued. In particular, the investigation of the properties of exact solvability of integrable systems led to the discovery of new important analytic and geometric techniques. It is interesting to observe that finite-dimensional integrable models coming from classical or quantum mechanics share many geometric and algebraic properties with the infinite-dimensional ones described in terms of soliton equations.

The study of the geometry of classical integrable systems has a long history, dating back to the works by Liouville, Jacobi, St\"{a}ckel, Eisenhart, Arnold, etc.

In this context, the \textit{bi-Hamiltonian} approach has shown to be crucial for many respects. A bi-Hamiltonian  manifold is a differentiable manifold endowed with a pencil of Poisson structures \cite{Magri78}. In particular, the special class of \textit{$\omega N$ manifolds}, introduced in \cite{MM,FP}, is characterized by a non-degenerate Poisson bivector (whose inverse provides a symplectic structure $\omega$), and a compatible $(1,1)$ tensor field $\boldsymbol{N}$, also called recursion or hereditary operator. Such a tensor has a vanishing Nijenhuis torsion as a consequence of the existence of an underlying bi-Hamiltonian structure. The class of $\omega N$ manifolds offers a coherent approach to the construction of separation variables; it has been successfully applied, for instance, to the study of Gelfand-Zakharevich systems \cite{GZ,IMM,FP}.

The purpose of this paper is to present a new formulation of classical integrability based on Haantjes operators, namely  operator fields with vanishing \textit{Haantjes torsion}. The latter concept was introduced in 1955 by Haantjes in \cite{Haa} as a natural generalization of the Nijenhuis torsion \cite{GVY}. However,  the relevance of the Haantjes differential-geometric work in the realm of integrable systems quite surprisingly has not been recognized for a long time, with the exception of  some interesting applications to Hamiltonian systems of hydrodynamic type \cite{Fe,BogJMP,FeMa}.

The central notion underlying our formulation of integrability is that of \textit{Haantjes algebra}, introduced in \cite{TT2017}. Essentially, a Haantjes algebra is a pair $(M, \mathscr{H})$ where $M$ is a differentiable manifold and $\mathscr{H}$  is a set  of operator fields over $M$, with vanishing Haantjes torsion, which  satisfy suitable compatibility conditions among each others.

The study of these algebras was initiated by us for two reasons. Indeed,  they play a crucial role in the theory of diagonalization of operators on differentiable manifolds: Whenever the operators of a Haantjes algebra are semisimple and commute, a  set of local coordinates exists where all operators can be diagonalized simultaneously. Moreover, in the non-semisimple case, they acquire simultaneously a block-diagonal form. At the same time, Haantjes algebras naturally generalize several known interesting geometric structures arising in Riemannian geometry \cite{BCRframe,TT2015,TT2017}. A generalization of both the Haantjes torsion and the Haantjes theorem has been recently proposed in \cite{TT2021prepr}.

In this work, we will show the prominent role of Haantjes algebras in the theory of integrable systems. Indeed, we shall define a new family of manifolds, called  \textit{symplectic--Haantjes } (or $\omega\mathscr{H}$) manifolds. They are symplectic manifolds endowed with a Haantjes algebra of operators that are compatible with the symplectic structure.

 We shall prove that the integrability of a Hamiltonian finite-dimensional system can be characterized in terms of an Abelian algebra of  Haantjes operators, whose spectral and geometric properties turn out to be particularly rich. The notion of \emph{Haantjes chain}, defined in this framework, is a natural extension in the context of Haantjes geometry of previous similar notions known in the literature, as that of Lenard-Magri chain \cite{MagriLE} and generalized Lenard chain \cite{TKdV,FMT,MLenard}, relevant for quasi-bi-Hamiltonian systems and their generalizations.

A comparison between our notion of $\omega \mathscr{H}$ manifolds and   the recent definition of ``Haantjes manifolds'' due to Magri \cite{MFrob,MGall13,MGWDVV,MGall15} is in order. Our theory mainly differs from the fact that we assume the existence of an algebra of independent  Haantjes operators which are supposed to be compatible with the symplectic form $\omega$.  Besides, the Haantjes chains in our context are ``shorter'' than the ones defined in the recent Magri's theory \cite{MGall13}. This is due to a weaker
assumption that allows us to deal with both integrable and separable systems. This fact is an important novelty of the present work.

Our main result concerning integrability  is a theorem establishing that \textit{the existence of a $2n$-dimensional $\omega \mathscr{H}$ manifold is a necessary and sufficient condition for a non-degenerate
Hamiltonian system to be integrable} in the Liouville-Arnold sense. Precisely, we shall prove the existence of $n$ Haantjes operators

\begin{equation}\label{eq:LA}
\boldsymbol{K}_\alpha=\sum _{i=1}^n \frac{\nu_i^{(\alpha)}(\boldsymbol{J})}{\nu_i (\boldsymbol{J})}\bigg (\frac{\partial}{\partial J_i}\otimes \rd J_i +\frac{\partial}{\partial \phi_i}\otimes \rd \phi_i \bigg )\quad \alpha=1,\ldots,n
\ ,
 \end{equation}
where $\nu_i$ and $\nu_i^{(\alpha)}$ are the frequencies of the Hamiltonian $H$ and of the $(\alpha)-nth$ linear flow associated with the given system, respectively and $(\boldsymbol{J}, \boldsymbol{\phi})$ are a set  of action-angle variables. Formula \eqref{eq:LA} therefore intimately \textit{relates the Haantjes algebraic-geometric structure of an integrable system with its intrinsic dynamical properties}.

As a by-product of the main theorem, we will be able to define new general classes of integrable models possessing an
assigned Haantjes geometry. Quite interestingly, the systems so obtained are related to the wave equation.

An advantage of the present formulation \textit{\`a la Haantjes} (which also represents the main motivation for our study) is its generality: Haantjes tensors are indeed a larger class of tensors than those of Nijenhuis.

The proposed theory incorporates essentially all the known results on integrability of finite-dimensional systems that have been developed in a bi-Hamiltonian framework up to date, i.e. all the approaches based on Lenard chains and their generalizations (as quasi-bi-Hamiltonian systems \cite{Bl}, etc).

There is a neat relation between the Haantjes geometry developed in the present work and the well known Nijenhuis geometry. In fact, a  subfamily of symplectic-Haantjes manifolds
is provided by the class of symplectic-Nijenhuis ($\omega N$) manifolds. Precisely, we shall show that given an $\omega N$ manifold,  under mild assumptions one can  construct
an $\omega \mathscr{H}$ structure by taking $n$ independent powers of the recursion operator $\boldsymbol{N}$. In this case, $\boldsymbol{N}$ will play the role of a generator for the $\omega \mathscr{H}$ structure.

 However, it is important to notice that there exist $\omega \mathscr{H}$ structures not arising from a subjacent $\omega N$ structure. Indeed, whenever the Haantjes algebra $\mathscr{H}$ is non-Abelian, obviously it cannot be cyclically generated by a single Nijenhuis operator. This is the case, for instance, of the $\omega \mathscr{H}$ manifold associated with the superintegrable Post-Winternitz system.

Another noteworthy feature of our approach is that a priori the class of Haantjes algebras considered is \textit{not necessarily diagonalizable}. This aspect represents a generalization of the $\omega N$ approach, where indeed the operator $\boldsymbol{N}$ is  diagonalizable by hypothesis. Nevertheless, our theory keeps the intrinsic simplicity enjoyed by the standard approach to the Lenard-Magri chains for soliton hierarchies.

At the same time, the theory of $\omega \mathscr{H}$ manifolds is motivated by the crucial problem of the construction of coordinate systems allowing the additive separation of the associated Hamilton-Jacobi (HJ) equation (the separation variables). This is one of the most important problems in the theory of classical integrable systems, which historically has motivated a large amount of research work and inspired the formulation of several fundamental geometric developments.

The problem of the construction of separation of variables (SoV) can be recast in our approach and, in many cases, solved explicitly. Our main result in this direction is Theorem \ref{th:gHchart} ensuring the existence, under mild hypotheses, of a set of distinguished coordinates provided by the Haantjes structure associated with an integrable system, that we shall call the \textit{Darboux-Haantjes coordinates}. They represent separation coordinates for the Hamilton-Jacobi equation associated with the Hamiltonian functions of the given system.

The study of the general problem of separation of variables (including partial separation \cite{Stackel1897}, \cite{DKN2019}) in the Haantjes geometry is in progress; the case of multiseparable systems has been addressed in \cite{RTT2020}.

Finally, we mention that a generalization of $\omega \mathscr{H}$ manifolds that parallels the theory of Poisson-Nijenhuis manifolds \cite{KM} has been introduced in \cite{Thlt17}. The structures arising in this perspective, of Poisson-Haantjes type, will be suitable for studying Gelfand-Zakharevich systems \cite{GZ}.

The structure of the paper is as follows. In Section 2, we review the main algebraic properties of Nijenhuis and Haantjes tensors, and the notion of Haantjes algebras introduced in \cite{TT2017}.
In Section 3, we introduce the main geometric structures needed for the discussion of integrability, i.e. the $\omega \mathscr{H}$ manifolds; also, we clarify their relation with $\omega N$ manifolds.  In particular, a theorem guaranteeing the existence of the DH coordinates is proved. Besides, the notion of  generator of a Haantjes structure is defined. Section 4 contains the theorem that characterizes complete integrability via the Haantjes geometry. In Section 5, new integrable models related to the wave equation are deduced from suitable Haantjes structures. 

 In Section 6, a procedure for the construction of Haantjes structures for a given integrable system with two degrees of freedom is proposed. Also, the relevant example of the superintegrable Post-Winternitz system, whose separation coordinates are still not known,  is worked out.
 An application of our theory to a stationary reduction of the seventh-order equation of the KdV hierarchy is discussed in  Section 7. Some open problems are discussed in the final Section 8.

\section{Haantjes algebras of operators}
\label{sec:1}
Given a dynamical system defined over a finite--dimensional manifold $M$, a fundamental issue is to find suitable sets of coordinates  which allow us to decouple the equations of motion. The natural frames of such coordinates, being obviously integrable, can be characterized in a \textit{tensorial} manner as  eigendistributions of a suitable class of  $(1,1)$ tensor fields, i.e. the ones with a vanishing Nijenhuis or Haantjes tensor. In this section, we review some basic algebraic results concerning the theory of such tensors. For a more complete treatment, see
the original papers \cite{Haa,Nij}, the related ones \cite{Nij2,FN} and the recent review \cite{K17}.

\subsection{Preliminaries}
We shall denote by $M$  a differentiable manifold and by $\boldsymbol{L}:TM\rightarrow TM$ a $(1,1)$ smooth tensor field, i.e. a smooth field of linear operators on the tangent space at each point of $M$. In the following, all tensors will be considered to be smooth.
\begin{definition}
The
 \textit{Nijenhuis torsion} of $\boldsymbol{L}$ is the skew-symmetric  $(1,2)$ tensor field defined by
\begin{equation} \label{eq:Ntorsion}
\mathcal{T}_ {\boldsymbol{L}} (X,Y):=\boldsymbol{L}^2[X,Y] +[\boldsymbol{L}X,\boldsymbol{L}Y]-\boldsymbol{L}\Big([X,\boldsymbol{L}Y]+[\boldsymbol{L}X,Y]\Big),
\end{equation}
where $X,Y \in TM$ and $[ \ , \ ]$ denotes the commutator of two vector fields.
\end{definition}
In local coordinates $\boldsymbol{x}=(x^1,\ldots, x^n),$ the Nijenhuis torsion can be written in the form
\begin{equation}\label{eq:NtorsionLocal}
\begin{split}
(\mathcal{T}_{\boldsymbol{L}})^i_{jk}&=\sum_{\alpha=1}^n\bigg(\frac{\partial {\boldsymbol{L}}^i_k} {\partial x^\alpha} {\boldsymbol{L}}^\alpha_j -\frac{\partial {\boldsymbol{L}}^i_j} {\partial x^\alpha} {\boldsymbol{L}}^\alpha_k+\Big(\frac{\partial {\boldsymbol{L}}^\alpha_j} {\partial x^k} -\frac{\partial {\boldsymbol{L}}^\alpha_k} {\partial x^j}\Big ) {\boldsymbol{L}}^i_\alpha \bigg) \\
&= \sum_{\alpha=1}^n\bigg(
\boldsymbol{L}^\alpha_{[j}\partial_{|\alpha |}{\boldsymbol{L}}^i_{k]}-
\boldsymbol{L}^i_\alpha\partial_{[j} \boldsymbol{L}^\alpha_{k]}
\bigg) \ ,
\end{split}
\end{equation}
for the components of $\mathcal{T}_ {\boldsymbol{L}}$; among them, $n^2(n-1)/2$ are independent. Here for the sake of brevity we have used the notation $\partial_j:=\frac{\partial}{\partial x^j}$ and the indices between square brackets are to be skew--symmetrized, except those in $|\cdot|$.
\begin{definition}
\noi The \textit{Haantjes tensor} associated with $\boldsymbol{L}$ is the $(1, 2)$ tensor field defined by
\begin{equation} \label{eq:Haan}
\mathcal{H}_{\boldsymbol{L}}(X,Y):=\boldsymbol{L}^2\mathcal{T}_{\boldsymbol{L}}(X,Y)+\mathcal{T}_{\boldsymbol{L}}(\boldsymbol{L}X,\boldsymbol{L}Y)-\boldsymbol{L}\Big(\mathcal{T}_{\boldsymbol{L}}(X,\boldsymbol{L}Y)+\mathcal{T}_{\boldsymbol{L}}(\boldsymbol{L}X,Y)\Big).
\end{equation}
\end{definition}
The skew-symmetry of the Nijenhuis torsion implies that the Haantjes tensor is also skew-symmetric. Its explicit intrinsic expression is
\begin{equation} \label{eq:HaanEx}
\begin{split}
\mathcal{H}_{\boldsymbol{L}}(X,Y)=\boldsymbol{L}^4 [X,Y]+[\boldsymbol{L}^2X,\boldsymbol{L}^2Y] -2\boldsymbol{L}^3\Big([X,\boldsymbol{L}Y]+[\boldsymbol{L}X , Y]\Big)+\\
+\boldsymbol{L}^2\Big( [X, \boldsymbol{L}^2 Y]+4\, [\boldsymbol{L}X,\boldsymbol{L}Y]+[\boldsymbol{L}^2X,Y]\Big)
-2 \boldsymbol{L}\Big([\boldsymbol{L}X,\boldsymbol{L}^2Y]+[\boldsymbol{L}^2X,\boldsymbol{L}Y]\Big)
\ .
\end{split}
\end{equation}
Its local expression in recursive form is
\begin{equation}\label{eq:HaanLocal}
(\mathcal{H}_{\boldsymbol{L}})^i_{jk}=  \sum_{\alpha,\beta=1}^n\bigg(
\boldsymbol{L}^i_\alpha \boldsymbol{L}^\alpha_\beta(\mathcal{T}_{\boldsymbol{L}})^\beta_{jk}  +
(\mathcal{T}_{\boldsymbol{L}})^i_{\alpha \beta}\boldsymbol{L}^\alpha_j \boldsymbol{L}^\beta_k-
\boldsymbol{L}^i_\alpha\Big( (\mathcal{T}_{\boldsymbol{L}})^\alpha_{\beta k} \boldsymbol{L}^\beta_j+
 (\mathcal{T}_{\boldsymbol{L}})^\alpha_{j \beta } \boldsymbol{L}^\beta_k \Big)
 \bigg) \ .
\end{equation}
More explicitly, we have
\begin{equation} \label{eq:HaanExCoord}
\begin{split}
(\mathcal{H}_{\boldsymbol{L}})^i_{jk} &=   \sum_{\alpha=1}^n
\bigg(-  (\boldsymbol{L}^3)^i_\alpha \partial_{[j } \boldsymbol{L}^\alpha_{k]}+
  \sum_{\beta=1}^n\Big(
(\boldsymbol{L}^2)^i_\alpha\big(2\boldsymbol{L}^\beta_{[j}  \partial_{|\beta|} \boldsymbol{L}^\alpha_{k]}-
 \boldsymbol{L}^\beta_{[j} \partial_{k]}  \boldsymbol{L}^\alpha_{\beta})\\
 &+(\boldsymbol{L}^2)^\alpha_{[j}\boldsymbol{L}^{\beta}_{k]}\partial_{\alpha} \boldsymbol{L}^i_{\beta}
- \boldsymbol{L}^i_\alpha(\boldsymbol{L}^2)^\beta_{[j}\partial_{|\beta|}\boldsymbol{L}^\alpha_{k]}-
2 \sum_{\gamma=1}^n
 \boldsymbol{L}^i_\alpha \boldsymbol{L}^\beta_{[j}\boldsymbol{L}^\gamma_{k]}\partial_{\beta}
 \boldsymbol{L}^\alpha_{\gamma}
\Big)\bigg)
\ .
\end{split}
\end{equation}
The following notion is at the heart of the theory we are going to develop.
\begin{definition}
A Haantjes (Nijenhuis)   operator is an   operator  field whose  Haantjes (Nijenhuis) torsion identically vanishes.
\end{definition}
\begin{remark}\label{rem:Ldiag}
Any operator field on a two dimensional manifold  is  a Haantjes operator. Furthermore,  on an $n$-dimensional manifold, any operator field that admits local charts where it takes a diagonal form is also a Haantjes operator.
\end{remark}
\subsection{Haantjes algebras}

The concept of Haantjes algebra was defined in \cite{TT2017} as an abstract setting for developing the theory of Haantjes operators. Indeed, many important
properties of this class of operators can be discussed in a general, basis-independent context, in which the algebraic structure is kept to a minimum. As we shall see, additional structures
(as symplectic or Poisson structures) are needed if one wishes to discuss integrable models and the construction of suitable separation variables.
For sake of completeness, here a very brief review of the theory of Haantjes algebras is presented.

\begin{definition}\label{def:HM}
A Haantjes algebra of rank $m$ is a pair    $(M, \mathscr{H})$ which satisfies the following  conditions:
\begin{itemize}
\item
$M$ is a differentiable manifold of dimension $n$;
\item
$ \mathscr{H}$ is a set of Haantjes  operators $\boldsymbol{K}:TM\rightarrow TM$   that generates
\begin{itemize}
\item
 a  free  module  of rank $m$ over the ring of
smooth functions on $M$
\begin{equation}\label{eq:Hmod}
\mathcal{H}_{\big( f\boldsymbol{K}_1 +
                             g\boldsymbol{K}_2\big)}(X,Y)= \boldsymbol{0}
 \ , \qquad\forall X, Y \in TM \ ,\quad \forall f,g \in C^\infty(M)\ ,\quad \forall \boldsymbol{K}_1, \boldsymbol{K}_2 \in \mathscr{H},
\end{equation}
  \item
a ring  w.r.t. the composition operation
\begin{equation} \label{eq:Hring}
\mathcal{H}_{\big(\boldsymbol{K}_1 \, \boldsymbol{K}_2\big)}(X,Y)= \boldsymbol{0}\ , \qquad
\forall \boldsymbol{K}_1, \boldsymbol{K}_2 \in \mathscr{H}, \quad\forall X, Y \in TM\ ,
\end{equation}
\end{itemize}
\end{itemize}
If
\begin{equation} \label{eq:Kcomm}
\boldsymbol{K}_1\,\boldsymbol{K}_2=\boldsymbol{K}_2\,\boldsymbol{K}_1\, \qquad\forall \boldsymbol{K}_1, \boldsymbol{K}_2 \in \mathscr{H}
\end{equation}
the  algebra $\mathscr{H}$ will be said to be an Abelian Haantjes algebra. Moreover, if   the identity operator $\boldsymbol{I}\in \mathscr{H}$, then $(M, \mathscr{H})$ will be said to be a Haantjes algebra with identity.
\end{definition}
\par
The assumptions \eqref{eq:Hmod},  \eqref{eq:Hring} ensure that the set $\mathscr{H}$ is an \textit{associative algebra of Haantjes operators}; moreover, the Hamilton-Cayley theorem implies that its
rank $m$ is not greater than $n$.

\vspace{2mm}

The conditions of Definition \ref{def:HM} might seem difficult to realize. However, a natural class of  Haantjes algebras is given, in a local chart
$\{ U,\boldsymbol{x}=(x^1,\ldots,x^n)\}$,  by operators of the form
\begin{equation}\label{eq:Kdiag}
\boldsymbol{K}=\sum _{i=1}^n l_{i }(\boldsymbol{x})
\frac{\partial}{\partial x^i}\otimes \rd x^i \ ,
 \end{equation}
where $l_i(\boldsymbol{x}):= l^{i}_{i}(\boldsymbol{x})$ are arbitrary smooth functions playing the role of the eigenvalues of $\boldsymbol{K}$.
 The  diagonal operators \eqref{eq:Kdiag} have vanishing Haantjes tensor   and satisfy the differential compatibility condition \eqref{eq:Hmod}, by virtue of Remark \ref{rem:Ldiag}.  Furthermore, they form
a commutative ring; thus they also satisfy   Eqs. \eqref{eq:Hring}.
\begin{definition} \label{def:Kdiag}
The algebra generated by operators of the form \eqref{eq:Kdiag} will be said to be a \emph{diagonal} Haantjes algebra.
\end{definition}

\subsubsection{Cyclic Haantjes algebras}  \label{sec:cHa}
\par
 A particularly relevant class of Abelian Haantjes algebras  is given by
those generated by a {\it single} Haantjes operator   $\boldsymbol{L}:TM\mapsto TM$  \cite{TT2017}.
%\begin{example}
One can construct directly a Haantjes algebra $\mathcal{L}$ of  $rank$ $m\leq n=dim(M)$ by choosing as a set of generators  the   first $(n-1)$ powers of $\boldsymbol{L}$ together with $\boldsymbol{L}^0:=\boldsymbol{I}$
\begin{equation*}  \label{algebraic}
\mathcal{L}(\boldsymbol{L}):=Span\{ \boldsymbol{I} , \boldsymbol{L}, \ldots,\boldsymbol{L}^{n-1} ,\ldots\}=Span\{ \boldsymbol{I} , \boldsymbol{L},\ldots,\boldsymbol{L}^{n-1}\}\ .
\end{equation*}
The fact that $\mathcal{L}(\boldsymbol{L})$ is a Haantjes algebra is a consequence of the following result.
\begin{proposition} \label{pr:Lpowers}  \cite{BogI}.
Let $\boldsymbol{L}$ be an operator with vanishing  Haantjes tensor on $M$. Then, for any  polynomial $p(\boldsymbol{x},\boldsymbol{L})= \sum_{j=0}^{(n-1)} a_{j} (\boldsymbol{x}) \boldsymbol{L}^j$
with coefficients $a_{j}\in C^\infty(M)$, the associated Haantjes tensor also vanishes, i.e.
\begin{equation}
\mathcal{H}_{\boldsymbol{L}}(X,Y)= \boldsymbol{0} \ \Longrightarrow \
\mathcal{H}_{p(\boldsymbol{x},\boldsymbol{L})}(X,Y)= \boldsymbol{0}.
\end{equation}
\end{proposition}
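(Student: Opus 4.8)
The plan is to run an induction on $\dim M = n$, reducing the general polynomial $\boldsymbol{P}:=\sum_j a_j(\boldsymbol{x})\boldsymbol{L}^j$ to the diagonal situation already settled in Proposition \ref{pr:Hd}. Working on an open set where the eigenvalues of $\boldsymbol{L}$ have locally constant algebraic multiplicities, the generalized eigenspaces $\mathcal{V}_i=\ker(\boldsymbol{L}-\lambda_i\boldsymbol{I})^{m_i}$ are smooth distributions with $TM=\bigoplus_i\mathcal{V}_i$. The structure theory of Haantjes operators (Haantjes \cite{Haa}) guarantees that $\mathcal{H}_{\boldsymbol{L}}=0$ makes these characteristic distributions simultaneously Frobenius--integrable, so that there is an adapted chart in which $\boldsymbol{L}$ is block-diagonal, one block per distinct eigenvalue. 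Two elementary remarks feed the induction: by Proposition \ref{pr:fL} (with $f=-\lambda_i$, $g=1$) the shifted restriction $\boldsymbol{N}_i:=\boldsymbol{L}\vert_{\mathcal{V}_i}-\lambda_i\boldsymbol{I}$ is again a Haantjes (indeed nilpotent) operator; and the base cases $n=1$ (trivial) and $n=2$ (automatic, by Example \ref{ex:H2}) hold at once.

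The multi-eigenvalue step is then painless. Since $\boldsymbol{P}$ is a polynomial in $\boldsymbol{L}$, it commutes with $\boldsymbol{L}$, preserves every $\mathcal{V}_i$, and is block-diagonal in the \emph{same} adapted chart. On the $i$-th block it reads $\boldsymbol{P}\vert_{\mathcal{V}_i}=p(\lambda_i)\boldsymbol{I}+\sum_{k\ge 1}\tfrac{1}{k!}\,p^{(k)}(\lambda_i)\,\boldsymbol{N}_i^k$, i.e. an affine combination of the identity with a polynomial (with $C^\infty$ coefficients) in the nilpotent Haantjes operator $\boldsymbol{N}_i$. If $\boldsymbol{L}$ has at least two distinct eigenvalues each block has dimension $m_i<n$, so the inductive hypothesis makes $\sum_{k\ge 1}\tfrac{1}{k!}p^{(k)}(\lambda_i)\boldsymbol{N}_i^k$ a Haantjes operator on the block, and Proposition \ref{pr:fL} absorbs the $p(\lambda_i)\boldsymbol{I}$ term. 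Finally the blocks reassemble: in the adapted chart the local expression \eqref{eq:HaanLocal} of $\mathcal{H}_{\boldsymbol{P}}$ decouples exactly as in the computation behind Proposition \ref{pr:Hd}, so block-wise vanishing yields $\mathcal{H}_{\boldsymbol{P}}=0$. When $\boldsymbol{L}$ is pointwise diagonalizable with functionally independent eigenvalues this is immediate, since $\boldsymbol{P}=\sum_i p(l_i)\,\partial_{x_i}\otimes\rd x_i$ is diagonal and Proposition \ref{pr:Hd} applies directly.

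The genuine obstacle is the remaining case, where $\boldsymbol{L}$ has a single eigenvalue $\lambda$ on all of the $n$-dimensional chart, so that no dimensional reduction is available. Here $\boldsymbol{L}=\lambda\boldsymbol{I}+\boldsymbol{N}$ with $\boldsymbol{N}$ a nilpotent Haantjes operator, and by Proposition \ref{pr:fL} the whole claim collapses to showing that $\boldsymbol{Q}:=\sum_{k\ge 1}b_k(\boldsymbol{x})\,\boldsymbol{N}^k$, with $b_k\in C^\infty(M)$, again has vanishing Haantjes tensor. I would attack this through the canonical local normal form of a nilpotent Haantjes operator provided by \cite{Haa}: the powers $\boldsymbol{N}^k$ all commute and share the ascending flag $\ker\boldsymbol{N}\subset\ker\boldsymbol{N}^2\subset\cdots$, which $\mathcal{H}_{\boldsymbol{N}}=0$ forces to be an integrable flag of distributions, and I would substitute this normal form into \eqref{eq:HaanLocal} for $\boldsymbol{Q}$. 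The delicate point, and the step I expect to demand the most care, is the bookkeeping of the first-derivative terms of \eqref{eq:HaanLocal}: the undifferentiated, purely algebraic contributions vanish because $\boldsymbol{Q}$, like $\boldsymbol{N}$, is nilpotent and flag-preserving, but showing that the derivative couplings $b_k\,\partial b_l$ and $b_kb_l\,\partial(\boldsymbol{N}^m)$ cancel requires using the integrability of the flag, equivalently the vanishing of $\mathcal{T}_{\boldsymbol{N}}$ along the leaves, rather than merely the pointwise nilpotent structure.
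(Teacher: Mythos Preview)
The paper does not argue this result itself; it simply cites Corollary~3.3 of Bogoyavlenskij \cite{BogI}, whose method is purely algebraic: one establishes explicit polynomial identities expressing $\mathcal{H}_{p(\boldsymbol{L})}$ in terms of $\mathcal{H}_{\boldsymbol{L}}$ and $\mathcal{T}_{\boldsymbol{L}}$ that hold for \emph{any} $(1,1)$ tensor field, with no appeal to local normal forms, eigenvalue structure, or induction on dimension. Your geometric strategy is therefore a genuinely different route, but as written it is incomplete in two places.

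The gap you flag yourself is the decisive one. The single--eigenvalue case, where $\boldsymbol{L}=\lambda\boldsymbol{I}+\boldsymbol{N}$ with $\boldsymbol{N}$ nilpotent on the full $n$--dimensional chart, is precisely the situation your induction cannot reduce, and you have only sketched an attack. The integrability of the flag $\ker\boldsymbol{N}\subset\ker\boldsymbol{N}^2\subset\cdots$ does follow from $\mathcal{H}_{\boldsymbol{N}}=0$ (Theorem~\ref{th:Haan}), but there is no local normal form for a nilpotent Haantjes operator nearly as rigid as the diagonal form underlying Proposition~\ref{pr:Hd}, and nothing you have written forces the derivative couplings $b_k\,\partial b_l$ and $b_kb_l\,\partial(\boldsymbol{N}^m)$ in $\mathcal{H}_{\sum_k b_k\boldsymbol{N}^k}$ to cancel. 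There is a second, quieter gap in the reassembly step: you assert that block--wise vanishing of $\mathcal{H}_{\boldsymbol{P}}$ yields global vanishing ``exactly as in the computation behind Proposition~\ref{pr:Hd}'', but that proposition concerns \emph{diagonal} operators, and its mechanism (eq.~\eqref{Haantdiag}) does not transfer verbatim to the block--diagonal situation. The entries of the $i$--th block of $\boldsymbol{P}$ may depend on the coordinates of the $j$--th block, so the mixed components $(\mathcal{H}_{\boldsymbol{P}})^i_{jk}$ with $i,j,k$ spread over different blocks require their own argument, and the inductive hypothesis, which lives on the leaves, does not speak to them. Both obstacles are exactly what Bogoyavlenskij's algebraic identities are designed to sidestep.
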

\begin{proof}
See Corollary 3.3, p. 1136 of Ref. \cite{BogI}.
\end{proof}
%\end{example}
According to the previous discussion, one can introduce the notion of cyclic Haantjes algebras.
\vspace{2mm}

\begin{definition} \label{def:CHa}
Let  $(M,  \mathscr{H})$ be an Abelian Haantjes algebra of rank $m$. An  operator $\boldsymbol{L}$, with minimal polynomial of degree $h\geq m$, is a generator of  $\mathscr{H}$  if
\begin{equation*}
 \mathscr{H}\subseteq\mathcal{L}(\boldsymbol{L})\ .
\end{equation*}
The corresponding algebra will be said to be a \emph{cyclic} Haantjes algebra.

Let
 \begin{equation} \label{eq:baseCicl}
 \mathcal{B}_{cyc}=\{ \boldsymbol{I} , \boldsymbol{L},\boldsymbol{L}^2,\ldots, \boldsymbol{L}^{m-1} \}
 \end{equation}
 be  a cyclic basis of $\mathcal{L}(\boldsymbol{L})$.  A basis $\mathcal{B}$ of  $ \mathscr{H}$ such that $\mathcal{B}\subseteq \mathcal{B}_{cyc}$ will be said to be a cyclic basis of $\mathscr{H}$.
\end{definition}
A generator of  $\mathscr{H} $ allows us to represent
each Haantjes operator $\boldsymbol{K}\in \mathscr{H} $  as a polynomial field in $\boldsymbol{L}$ of degree at most $(h-1)$, i.e.
\begin{equation}\label{eq:Hg}
\boldsymbol{K}=p_{\boldsymbol{K} }(\boldsymbol{x},\boldsymbol{L})=\sum_{i =0} ^{h-1} a_i(\boldsymbol{x})\,\boldsymbol{L}^i \ ,
\end{equation}
where $a_i(\boldsymbol{x})$ are  smooth functions on $M$.
A natural problem is to establish the conditions ensuring that a given Haantjes algebra is cyclic. This problem has been solved in Proposition 47 of \cite{TT2017}, where it was proven that each \textit{semisimple} Abelian algebra (see Definition \ref{def:Hss}) is cyclic. In the next subsection, we will present an extension of that result.

\subsection{Haantjes coordinates} \label{ss:Hcoor}
We shall always assume that the eigenvalues of any operator field  considered in this article are real functions. 
 
Let us recall the main result of \cite{TT2017} about the existence of Haantjes charts for Haantjes algebras.
The set of proper eigenvector fields of a generic operator $\boldsymbol{K}$, corresponding to an eigenvalue $l_i(\boldsymbol{x})$, is given by  $Ker(\boldsymbol{K}-l_{i } \boldsymbol{I})$,
whereas the set of its generalized eigenvector fields is $Ker(\boldsymbol{K}-l_{i } \boldsymbol{I})^{\rho_i}$. We have denoted by $\rho_i$ the Riesz index of $l_i$ (assumed to be independent of $\boldsymbol{x}$),
that is, the minimum integer such that
 $Ker(\boldsymbol{K}-l_{i } \boldsymbol{I})^{\rho_i}= Ker(\boldsymbol{K}-l_{i } \boldsymbol{I})^{\rho_i+1}$.
\begin{definition} \label{def:Hss}
A  Haantjes algebra is said to be semisimple if each $\boldsymbol{K} \in \mathscr{H}$ is semisimple (diagonalizable), that is, if each $\boldsymbol{K}$ admits a local reference frame formed by proper eigenvector fields of  $\boldsymbol{K}$.
\end{definition}

If a semisimple Haantjes algebra is Abelian, then there exists a local reference eigenframe in which all $\boldsymbol{K} \in  \mathscr{H}$ take simultaneously a diagonal form. A crucial question is to ascertain whether  an \emph{integrable} common eigenframe exists, or equivalently,
whether there exists a local coordinate chart where all operators can be simultaneously written in a  diagonal form.
The answer to this problem is offered by Theorem \ref{th:Halg}. Preliminarily, we  state the following
\begin{definition}
Let   $\{\mathcal{D}_i, \mathcal{D}_j, \ldots, \mathcal{D}_k \}$ be a set of distributions of vector fields. We shall say that these distributions  are mutually integrable if
\par
i) each of them is integrable;
\par
ii) any  sum $\mathcal{D}_i  + \mathcal{D}_j + \ldots + \mathcal{D}_k$ (where all indices $i,j,\ldots, k$ are different) is also integrable.
\end{definition}

\begin{theorem}\cite{TT2017} \label{th:Halg}
Let $(M, \mathscr{H})$ be an Abelian Haantjes algebra of rank $m$   and $\{\boldsymbol{K}_1, \ldots, \boldsymbol{K}_m \} $ a basis of it. Let us consider the spectral decomposition

\begin{equation} \label{eq:TMinters}
T_{\boldsymbol{x}}M= \bigoplus_{a=1}^v \mathcal{V}_a(\boldsymbol{x})
 \end{equation}
where
\begin{equation} \label{eq:Va}
\mathcal{V}_a(\boldsymbol{x}):=\mathcal{D}_{i_1}^{(1)}(\boldsymbol{x}) \bigcap  \ldots      \bigcap \mathcal{D}_{i_m}^{(m)}(\boldsymbol{x}) \qquad \qquad a:=(i_1,\ldots,i_m)
\end{equation}
and
\begin{equation}\label{def:Dialpha}
\mathcal{D}_{i_\alpha}^{(\alpha)}(\boldsymbol{x}) :=
Ker\big(\boldsymbol{K}_{\alpha}-l_{i_\alpha}^{(\alpha)}\boldsymbol{I}\big)^{\rho_{i_{\alpha}}}(\boldsymbol{x}), \qquad \alpha=1,\ldots,m , \ i_\alpha=1,\ldots,s_\alpha,
\end{equation}
where $s_\alpha$ is the number of the distinct eigenvalues of $\boldsymbol{K}_\alpha$.
The distributions $\mathcal{V}_a$ are mutually integrable; therefore,  there exists a set of  coordinates (that we shall call Haantjes coordinates) adapted to the decomposition \eqref{eq:TMinters},  such that all $\boldsymbol{K}\in \mathscr{H}$ can be simultaneously written in a  block-diagonal form.
Furthermore, if $\mathscr{H}$ is semisimple, in each set of Haantjes coordinates all $\boldsymbol{K}\in \mathscr{H}$ can be simultaneously written in a  diagonal form.
\end{theorem}
For the sake of clarity, we also quote Proposition 45 of  Ref. \cite{TT2017}, which is relevant in the forthcoming discussion.
\begin{proposition} \label{pr:CGK}
Let  $\boldsymbol{L}$ be a semisimple
operator with $h$ pointwise distinct eigenvalues $\{\lambda_{1}(\boldsymbol{x}), \ldots , \lambda_{h}(\boldsymbol{x})\}$, and $\boldsymbol{K}$ be another semisimple operator field possessing $s$ pointwise distinct eigenvalues, with $\text{s} \leq \text{h}$. The following conditions are equivalent:
\begin{itemize}
\item
 $\boldsymbol{K}$ belongs to the cyclic algebra of rank $h$ generated by $\boldsymbol{L}$, i.e.
\begin{equation} \label{eq:KinL}
 \boldsymbol{K}\in \mathcal{L}(\boldsymbol{L}) \ ;
\end{equation}
\item
 there exists a polynomial field  $p_{\boldsymbol{K} }(\boldsymbol{x},\lambda)$  in $\lambda$ of degree at most $(h-1)$ such that
\begin{equation} \label{eq:KpL}
 \boldsymbol{K}=p_{\boldsymbol{K} }(\boldsymbol{x},\boldsymbol{L}) \ ;
 \end{equation}
\item
each eigendistribution of  $\boldsymbol{L}$ is included in a single eigendistribution of  $\boldsymbol{K}$:
\begin{equation} \label{eq:cPayne}
\mathcal{C}_{\lambda_i}:= \ker(\boldsymbol{L}-\lambda_i \boldsymbol{I})
\subseteq\mathcal{D}_{l_i}:= \ker(\boldsymbol{K}-l_i \boldsymbol{I}),
\end{equation}
where it is understood that the eigenvalues $\Big(l_1(\boldsymbol{x}),\ldots,l_h(\boldsymbol{x})\Big)$ of $\boldsymbol{K}$ may not be all distinct.
\end{itemize}
\end{proposition}
We present now an extension of the result obtained in Proposition 47 of Ref. \cite{TT2017}, since it will be relevant in the formulation of the theory of $\omega\mathscr{H}$ manifolds.
\begin{proposition} \label{pr:CGKcoor}
Every    semisimple Abelian Haantjes algebra   $(M,\mathcal{H})$ of rank $m$ is cyclic and admits a  family of Haantjes generators; among them, there exists a Nijenhuis operator.

Moreover,  an operator $\boldsymbol{K} \in  \mathscr{H}$ is a generator of $\mathscr{H}$ if and only if  it possesses  $h$ distinct eigenvalues; precisely, $h=m$ if $\boldsymbol{I}\in \mathscr{H}$,  or $h=(m+1)$ otherwise.
 
\end{proposition}
\begin{proof}
Let us consider the Haantjes chart
\begin{equation}\label{eq:HchartKc}
\{ U, \boldsymbol{x}=(\boldsymbol{y}^{1}, \ldots,\boldsymbol{y}^v) \},
\end{equation}
adapted to the spectral decomposition \eqref{eq:TMinters}. Then,  
if  $v\geq m$, 
 each operator of the form
 \begin{equation}\label{eq:Ldiag}
\boldsymbol{L}=\sum_{a=1}^{v} \lambda_{a}(\boldsymbol{x})\sum_{j_{a}=1}^{r_{a}}
\frac{\partial}{\partial y^{a, j_{a}}}\otimes \rd y^{a, j_{a}}\ ,
 \end{equation}
where $r_a=rank~\mathcal{V}_a$ is a generator of $\mathscr{H}$, provided that its eigenvalue fields $\{\lambda_{1}(\boldsymbol{x}), \ldots , \lambda_{v}(\boldsymbol{x})\}$ are  arbitrary but
distinct smooth functions at any point of $U$. In fact, the eigendistributions of the operator \eqref{eq:Ldiag} are given by the distributions
$\mathcal{V}_a$ defined in Eq. \eqref{eq:Va};
consequently, by construction they satisfy condition \eqref{eq:cPayne}.  Besides, as the eigenvalues of $\boldsymbol{L}$  are distinct, this operator also satisfies the assumptions of Proposition \ref{pr:CGK}.
In particular, if the eigenvalues of $\boldsymbol{L}$ are chosen to be
\begin{equation} \label{eq:lambdaN}
\lambda_{a}(\boldsymbol{x})=\lambda_a(y^{a,1}, \ldots, y^{a,r_a}  )\qquad\qquad a=1,\ldots, v,
\end{equation}
then $\boldsymbol{L}$ is a Nijenhuis generator, that is, its Nijenhuis torsion identically vanishes.
% due to Lemma \ref{lem:LN}.
If $v<m$, a generator can still be constructed, because we can further decompose some of the distributions $\mathcal{V}_{a}$ into a direct sum of mutually integrable sub-distributions. Precisely, we have
\[
\mathcal{V}_a=\left < \frac{\partial}{\partial y^{a, 1}},\ldots , \frac{\partial}{\partial y^{a, r_a}} \right >=\bigoplus_{i_a=1}^{\bar{r}_{i_{a}}} \left <  \frac{\partial}{\partial y^{a, 1}},\ldots,\frac{\partial}{\partial y^{a, i_a}}\right >=\bigoplus_{i_a=1}^{\bar{r}_{i_{a}}} \mathcal{C}_{a, i_a}\ ,
\]
with $\sum \bar{r}_{j_{a}}= r_a$; the previous decomposition of $\mathcal{V}_a$ can be realized in such a way that the number of addends appearing into the direct sum
\begin{equation} \label{eq:TMintersKC}
T_{\boldsymbol{x}}M=\bigoplus_{a=1,i_a=1}^{v,\bar{r}_{i_{a}}} \mathcal{C}_{a, i_a}
\end{equation}
is  not less than $ m$.

\vspace{2mm}

Let us recall that since   $ \mathscr{H}$  is a semisimple algebra by assumption, then the degree of the minimal polynomial of any $\boldsymbol{K}\in \mathscr{H}$ coincides with the number of its   distinct eigenvalues. 
\par
Assume now that  $\boldsymbol{K}$ belongs to $\mathscr{H}$ and has $h$ distinct eigenvalue fields,  with $h=m$ if $\boldsymbol{I}\in \mathscr{H}$,  or $h=(m+1)$ otherwise. 
If $\boldsymbol{I}\in \mathscr{H}$,  then $\mathcal{L}(\boldsymbol{K})\subseteq \mathscr{H}$ and having rank $m$, it coincides with $\mathscr{H}$.  Otherwise,   $\mathscr{H}=Span\{\boldsymbol{K}, \ldots, \boldsymbol{K}^m\} \subset \mathcal{L}(\boldsymbol{K})$.
 \par
 Conversely, assume  that a generator $\boldsymbol{L}$ belongs to $\mathscr{H}$. Then, by Lemma 37 of \cite{TT2017},  we have that $h\leq m$ if  $\boldsymbol{I}\in \mathscr{H}$, or $ h\leq m+1$  otherwise. Furthermore, we observe that if $\boldsymbol{I} \not\in  \mathscr{H}$ and $h=m$, then the cyclic Haantjes algebra $\mathcal{L}(\boldsymbol{L})$ would coincide with $\mathscr{H}$, which is absurd. Taking into account that  $h\geq m$ by Definition \ref{def:CHa}, the statement is proven.
\end{proof}
As in the case $v\geq m$ the number $v$ coincides with the number $h$ of the pointwise distinct eigenvalues of $\boldsymbol{L}$, we deduce the following
\begin{corollary} \label{cor:v=m}
Let $(M,\mathscr{H})$ be  a semisimple Abelian Haantjes algebra  of rank $\mathrm{m}$. Assume  that the number of the addends of the decomposition \eqref{eq:TMinters} is $v\geq m$. Then,   generators of $\mathscr{H}$ belonging to $\mathscr{H}$ exist if and only  if $v=m$, when $\boldsymbol{I}\in \mathscr{H}$, or    $v=m+1$, otherwise.
\end{corollary}

In Section \ref{sec:cOmH}, we shall specialize the results of Proposition \ref{pr:CGKcoor} and Corollary \ref{cor:v=m}  to the case of $\omega \mathscr{H}$ manifolds.

\subsection{Haantjes chains}
The theory of Lenard--Magri chains is a fundamental piece of the geometric approach to soliton hierarchies. Lenard--Magri chains have been introduced in order to construct integrals of motion in
involution for infinite-dimensional Hamiltonian systems \cite{Magri78,MagriLE} (see also  \cite{PS}, for a brief  history about the origin of the name ``Lenard chains'').
Besides, some non trivial generalizations of Lenard--Magri chains have proved to be useful in the study of separation of variables for finite-dimensional Hamiltonian systems (see \cite{MT,MTPLA,MTRomp,FMT,FP} and reference therein).

\noi Hereafter, we propose a further generalization of  the standard notions of the theory, which has the advantage to be both simple and
directly connected to the theory of  classical integrable systems.

\begin{definition}
 Let $( M,\mathscr{H})$ be a Haantjes algebra of rank $m$. We shall say that a smooth function $H$ generates a Haantjes chain of closed 1-forms of length $m$ if  there exist
a distinguished basis  $\{\tilde{\boldsymbol{K}}_1,\ldots, \tilde{\boldsymbol{K}}_m\}$ of $\mathscr{H}$
 such that
\begin{equation} \label{eq:MHchain}
\rd (\tilde{\boldsymbol{K}}^T_\alpha \,\rd H )=\boldsymbol{0} \ , \quad\qquad \alpha=1,\ldots ,m \ ,
\end{equation}
where $\tilde{\boldsymbol{K}}^{T}_{\alpha}: T^{*}M \to T^{*}M$ is the transposed operator of $\tilde{\boldsymbol{K}}_{\alpha}$ . The (locally) exact 1-forms 
$$
\rd H_\alpha=\tilde{\boldsymbol{K}}^T_\alpha \,\rd H 
$$
(which are supposed to be linearly independent), are called the elements of the Haantjes chain of length $\mathrm{m}$ generated by $H$; the functions $H_\alpha$ are their potential functions.
\end{definition}

In order to enquire about the existence of Haantjes chains for an assigned Haantjes algebra, we have to consider the codistribution, of rank  $r\leq m$, generated by a given function $H$ through an arbitrary basis
$ \{ \boldsymbol{K}_1,  \boldsymbol{K}_2,\ldots,\boldsymbol{K}_{m}\} $ of $\mathscr{H}$, i.e.
\begin{equation} \label{eq:codKH}
\mathcal{D}_H^\circ:=Span\{ \boldsymbol{K}_1^T dH,  \boldsymbol{K}_2^T \rd H,\ldots,\boldsymbol{K}_{m}^T\,\rd H\} \ ,
\end{equation}
and the distribution $\mathcal{D}_H$ of the vector fields annihilated by them, which has rank  $(n-r)$.
Note that such distributions do not depend on the particular basis chosen in $\mathscr{H}$.

The following theorem offers a geometric characterization of the existence of a Haantjes chain generated by a smooth function $H$ in terms of the Frobenius integrability of its
associated codistribution.
\begin{theorem}\label{th:LHint}
 Let  $(M,  \mathscr{H})$ be a  Haantjes algebra of rank $m$, and $H$ be a smooth function on $M$. Let  $\mathcal{D}_H^\circ$ be the codistribution \eqref{eq:codKH},
assumed to be of rank $m$ (independent on $\boldsymbol{x}$),
and $\mathcal{D}_H$ be the distribution of the vector fields annihilated by the $1$-forms of $\mathcal{D}_H^\circ$.
Then, the function $H$  generates a Haantjes chain $\eqref{eq:MHchain}$
if and only if $\mathcal{D}^\circ_H$ (or equivalently $\mathcal{D}_H$) is Frobenius-integrable.
\par
\end{theorem}
\begin{proof}
By definition, the Haantjes chain \eqref{eq:MHchain} contains $m$ independent  exact 1-forms. Therefore, they generate the integrable  distribution
\begin{equation} \label{eq:LHD}
 \mathcal{D}^\circ=Span \{\rd H_1=\tilde{\boldsymbol{K}}^T_1 \,\rd H ,\ldots, \rd H_m=\tilde{\boldsymbol{K}}^T_m\,\rd H\} \ ,
\end{equation}
which coincides with  $\mathcal{D}^\circ_H$.
\noi Vice versa, let $\mathcal{D}_H^\circ$ be integrable and  $\mathscr{D}_H$ its foliation. Then, there exist $m$ independent functions $(H_1,H_2,\ldots, H_m)$ which are constant on the leaves
of $\mathscr{D}_H$. Their differentials belong to $\mathcal{D}_H^\circ$, hence they can be written as
\begin{equation} \label{eq:Ltilde2}
\rd H_\alpha=
%\sum_{j=1}^m a_{ij}(\boldsymbol{x}) \beta_j\stackrel{\eqref{eq:LHD0}}{=}
 \left(\sum_{\beta=1}^m a_{\alpha \beta} (\boldsymbol{x})\boldsymbol{K}_{\beta}^T\right )\rd H =: \tilde{\boldsymbol{K}}^T_\alpha \,\rd H \qquad \alpha=1,\ldots, m.
\end{equation}
\end{proof}

\section{The theory of symplectic-Haantjes manifolds}
In  this section we introduce the new class of symplectic-Haantjes manifolds;  we shall call them $\omega \mathscr{H}$ manifolds, by analogy with the well-known $\omega N$ manifolds \cite{MM,FP}.
These new manifolds are essentially Haantjes algebras endowed with a symplectic structure. The main reason to define these manifolds is that, apart from their interesting mathematical properties,   they provide a simple but sufficiently general setting in which the theory of Hamiltonian integrable systems can be naturally formulated.
\subsection{Haantjes algebras and $\omega \mathscr{H}$ manifolds}

\begin{definition}\label{def:oHman}
A symplectic--Haantjes (or $\omega \mathscr{H}$) manifold  of class $m$ is a triple $( M,\omega,\mathscr{H})$ which satisfies the following properties:
\begin{itemize}
\item[i)]
$(M,\omega)$  is a   symplectic  manifold of dimension $ 2 \, n$;
%\item[ii)]
%$\omega$ is a symplectic form in $M$;
\item[ii)]
$\mathscr{H}$ is a Haantjes algebra of rank $m$;
\item[iii)]
$(\omega,\mathscr{H})$ are algebraically compatible, that is
%\begin{equation} \label{eq:compomH}
$$
\omega(X,\boldsymbol{K} Y)=\omega(\boldsymbol{K} X,Y)  \qquad \forall \boldsymbol{K} \in \mathscr{H}\ ,
$$
%\end{equation}
or equivalently
\begin{equation}\label{eq:compOmH}
\boldsymbol{\Omega}\, \boldsymbol{K} =\boldsymbol{K}^T\boldsymbol{\Omega} ,\qquad \ \forall \boldsymbol{K} \in \mathscr{H}\ .
\end{equation}
\end{itemize}
\noi
Hereafter $\boldsymbol{\Omega}:=\omega ^\flat:TM\rightarrow T^*M$ denotes the  fiber bundles isomorphism defined by
$$
\omega(X,Y)=<\boldsymbol{\Omega} X,Y> \qquad\qquad\forall X, Y \in TM\ ,
$$
and the map $\boldsymbol{P}:=\boldsymbol{\Omega}^{-1}:T^*M \rightarrow TM$  is the Poisson bivector induced by  the symplectic structure $\omega$.
\par
If the identity operator $\boldsymbol{I}$ belongs to $\mathscr{H}$, then we shall say that $( M,\omega,\mathscr{H})$ is a $\omega \mathscr{H}$ manifold with identity. If $\mathscr{H}$ is an Abelian Haantjes algebra, the resulting $\omega \mathscr{H}$ manifold will be said to be Abelian.
\end{definition}

\par
\begin{remark}\label{rem:omHdiag}
The set of conditions of Definition \ref{def:oHman} admits a natural, simple realization. Indeed, note that in a coordinate system $\boldsymbol{x}=(x^1,\ldots,x^{2n})$
the operators
\begin{equation}\label{eq:Hdiaglambda}
\boldsymbol{K}_\alpha=\sum _{j=1}^{2n} \lambda_{j }^{(\alpha)}(\boldsymbol{x})
\frac{\partial}{\partial x^j}\otimes dx^{j}, \qquad\qquad \alpha=1,\ldots, m\ ,
 \end{equation}
according to Definition \ref{def:Kdiag}, generate a diagonal Haantjes algebra for any smooth function $\lambda_{j }^{(\alpha)}(\boldsymbol{x})$. Besides, by imposing the  algebraic compatibility conditions \eqref{eq:compOmH} we get  in a Darboux chart the solutions
$\{ \boldsymbol{x}=(q^1,\ldots q^n, p_1,\ldots,p_n)\}$ given by
\begin{equation}\label{eq:Hdiag}
\boldsymbol{K}_\alpha=\sum _{i=1}^{n} l_{i }^{(\alpha)}(\boldsymbol{x})
\Big(\frac{\partial}{\partial q^i}\otimes \rd q^i + \frac{\partial}{\partial p_i}\otimes \rd p_i\Big ), \qquad\qquad \alpha=1,\ldots, m\ .
 \end{equation}
Here $l_{i}^{(\alpha)}= \lambda_i^{(\alpha)}(\boldsymbol{x})=\lambda_{n+i}^{(\alpha)}(\boldsymbol{x})$, $i=1,\ldots,n$.
\end{remark}

As a consequence of conditions \eqref{eq:compOmH}, one can immediately deduce the following proposition, which turns out to be crucial for many results of the present theory. For instance, it has important consequences on the spectrum and the eigenvector fields of the Haantjes operators belonging to a $\omega\mathscr{H}$ manifold.

\begin{proposition}\label{pr:ss}
Let    $( M,\omega,\mathscr{H})$  be an $\omega \mathscr{H}$ manifold. Then,  any composed operator $\boldsymbol{\Omega}~p(\boldsymbol{x},\boldsymbol{K})$, $\boldsymbol{P}~q(\boldsymbol{x},\boldsymbol{K}^T)$ (where $p(\boldsymbol{x},\boldsymbol{K})$ and $q(\boldsymbol{x},\boldsymbol{K})$ are polynomial fields in $\boldsymbol{K}$ and $\boldsymbol{K}^{T}$ respectively) is skew-symmetric $\forall \boldsymbol{K} \in\mathscr{H}$. Moreover, if $\omega \mathscr{H}$ is Abelian, then $\boldsymbol{K}^T_{\alpha}\boldsymbol{\Omega} \boldsymbol{K}_{\beta}$,
 $\boldsymbol{K}_{\alpha} \boldsymbol{P} \boldsymbol{K}_{\beta}^T$,    are also skew-symmetric $\forall \boldsymbol{K}_{\alpha},\boldsymbol{K}_{\beta} \in\mathscr{H}$.
\end{proposition}

\begin{corollary}\label{lm:aupari}
Given a $2n$-dimensional $\omega \mathscr{H}$ manifold $M$, every generalized eigen--distribution  $Ker(\boldsymbol{K}-l_{i } \boldsymbol{I})^{{r_i}}$, $r_i \in \mathbb{N}$, is  of even rank.
Therefore, the geometric and algebraic multiplicities of  each   eigenvalue $l_{i }(\boldsymbol{x})$ are even.
\end{corollary}
\begin{proof}
 Given an $\omega \mathscr{H}$ manifold,  every generalized
 eigen--distribution $Ker(\boldsymbol{K}-l_{i } \boldsymbol{I})^{r_i}$  has the same rank of   the kernel of the operator
$\boldsymbol{\Omega} (\boldsymbol{K} -l_{i }  \boldsymbol{I})^{r_i}$, which is skew-symmetric by virtue of   Proposition \ref{pr:ss}. Thus, the second statement in the Corollary is a consequence of the fact that   the geometric multiplicity of
an   eigenvalue   $l_{i }(\boldsymbol{x})$ is  equal to the rank of $Ker(\boldsymbol{K}-l_{i } \boldsymbol{I})$,
 and its algebraic multiplicity  to the rank of $Ker(\boldsymbol{K}-l_{i } \boldsymbol{I})^{{\rho_i}}$.
  \end{proof}

Due to the above corollary, and the spectral decomposition of the tangent spaces of $M$ given by
\begin{equation}\label{def:Di}
T_{\boldsymbol{x}}M=\bigoplus_{i=1}^s \mathcal{D}_i(\boldsymbol{x}), \qquad \qquad
\mathcal{D}_i(\boldsymbol{x}) :=
Ker\big(\boldsymbol{K}-l_i\boldsymbol{I}\big)^{\rho_i}(\boldsymbol{x}),
\end{equation}
 as $rank$ $\mathcal{D}_i\geq 2, \  i=1,\ldots s$ we conclude that the number of the distinct eigenvalues of any
Haantjes operator $\boldsymbol{K}$ of an $\omega\mathscr{H}$ structure is not greater than $n$.
\begin{definition} \label{def:max}
Given a $2n$-dimensional $\omega\mathscr{H}$ manifold, if the number of  pointwise distinct eigenvalues of a Haantjes operator $\boldsymbol{K}\in \mathscr{H}$ is $n$, we shall say that such  operator is maximal.
\end{definition}
\noi
%Let us clarify the relation between a maximal Haantjes operator and the degree of its minimal polynomial.
Observe that the minimal polynomial  of a maximal operator $\boldsymbol{K}\in \mathscr{H}$ has the form
\begin{equation} \label{def:minpol}
m_{\boldsymbol{K}}(\boldsymbol{x},\lambda)=
\prod_{i=1}^n \Big(\lambda - l_i(\boldsymbol{x})\Big)^{\rho_i} \ .
\end{equation}

\begin{lemma}\label{lm:max}
A Haantjes operator $\boldsymbol{K}$ of a $2n$-dimensional $\omega \mathscr{H}$ manifold of class $m$  
is maximal if and only if its minimal polynomial has degree  $m=n$. Therefore,  its minimal polynomial is the product of $n$ linear factors; thus,   $\boldsymbol{K}$ is pointwise semisimple.
\end{lemma}
\begin{proof}
As a consequence of  Corollary \ref{lm:aupari},  if $\boldsymbol{K}$ admits a Jordan chain of length $\rho_i$ associated with a given eigenvalue   $l_i(\boldsymbol{x})$,
there must exist, for the same eigenvalue, a ``twin'' Jordan chain of the same length. Consequently, the number of the Jordan chains of length $\rho_i$ associated to a given eigenvalue
is even, and  therefore $\rho_i\leq n$.   

Observe that $\boldsymbol{K}$ can be maximal  if and only if $\rho_i=1$, $i=1,\ldots,n$, due to Eq. \eqref{def:minpol}. In fact, in this case every eigendistribution of $\boldsymbol{K}$ has rank $2$ and it is formed by proper eigenvector fields.
\end{proof}

In the following,  we shall present some properties of both algebraic and differential-geometric nature for the Haantjes operators
  $\boldsymbol{K} \in \mathscr{H}$.
Denote by
\begin{equation} \label{eq:E}
\mathcal{E}_i := \bigoplus_{{j=1,\, j\neq i}} ^s  \mathcal{D}_j=Im \Big(\boldsymbol{K}-l_i\mathbf{I}\Big)^{\rho_i } \ ,\qquad\qquad i=1,\ldots,s,
\end{equation}
the  distribution of rank ($2 n-r_i$) spanned by all of the generalized eigenvectors of a Haantjes operator
$\boldsymbol{K}$, except those associated with the eigenvalue $l_i$.
Such a distribution  will be called a \emph{characteristic distribution} of  $\boldsymbol{K}$. Let $\mathcal{E}_i^{\circ}$ denote the module of one-forms that annihilate all vector fields of the distribution
$\mathcal{E}_i$.
\begin{proposition} \label{pr:relDE}
Given an $\omega \mathscr{H}$ manifold, the relations
\begin{eqnarray}
\label{eq:OmD}
\mathbf{\Omega}(\mathcal{D}_j)&=&\mathcal{E}_j^\circ \ \Leftrightarrow
\mathcal{D}_j = \boldsymbol{P}(\mathcal{E}_j^\circ)= \mathcal{E}_j^\perp  \ , \\
\label{eq:OmE}
\mathbf{\Omega}(\mathcal{E}_j)&=&\mathcal{D}_j^\circ\ \Leftrightarrow
\mathcal{E}_j = \boldsymbol{P}(\mathcal{D}_j^\circ)= \mathcal{D}_j^\perp \ ,
\end{eqnarray}
hold. Here $\mathcal{E}_j^\perp$ and $\mathcal{D}_j^\perp$ are the symplectic orthogonal distributions of $\mathcal{E}_j$ and $\mathcal{D}_j$, respectively.
\end{proposition}
\begin{proof}
Property \eqref{eq:OmD} follows from  the compatibility condition \eqref{eq:compOmH}, taking into account that the symplectic operator $\mathbf{\Omega}$ is invertible. In fact,
for each generalized eigenvector field $Y_j \in \mathcal{D}_j$, the one-form   $\mathbf{\Omega}Y_j$ is a generalized eigenform  of $\boldsymbol{K}^T$, as one infers from the relation
$$
(\boldsymbol{K}^T-l_j \boldsymbol{I})^{\rho_j}\boldsymbol{\Omega}\, Y_j\stackrel{Prop. \ref{pr:ss}}{=}\boldsymbol{\Omega} (\boldsymbol{K}-l_j \boldsymbol{I} )^{\rho_j}\,Y_j=\boldsymbol{0}.
$$
Then, since
\begin{equation} \label{eq:autof}
 Ker(\boldsymbol{K}^T-l_i \boldsymbol{I})^{\rho_i}=\bigg ( Im \Big(\boldsymbol{K}-l_i\mathbf{I}\Big)^{\rho_i } \bigg )^\circ=
\mathcal{E}_i^\circ \ ,
\end{equation}
 we deduce that $\mathbf{\Omega}Y_j(\boldsymbol{x})$ belongs to $\mathcal{E}_j^\circ(\boldsymbol{x})$. Since this subspace has the same dimension of  $\mathcal{D}_j(\boldsymbol{x})$, we obtain Eq. \eqref{eq:OmD}. The companion relation \eqref{eq:OmE} follows from Eq. \eqref{eq:OmD} jointly with the observation that,  by construction, $\mathcal{E}_j(\boldsymbol{x})$ is a complementary subspace of $\mathcal{D}_j(\boldsymbol{x})$ in $T_{\boldsymbol{x}}M$.
\end{proof}

\begin{proposition}\label{pr:FD}
Given an  $\omega \mathscr{H}$ manifold, the distributions $ \mathcal{D}_j$ of each $\boldsymbol{K} \in \mathscr{H}$ are integrable and of even rank.  Their  integral leaves are symplectic submanifolds of $M$ and are symplectically orthogonal to each other, namely
\begin{eqnarray}
\omega({\mathcal{D}_j,\mathcal{D}_j})&=&symplectic    \label{eq:Dsym} \\
\omega (\mathcal{D}_j,\mathcal{D}_k) & = & \boldsymbol{0}  \label{eq:Dsyo} \qquad\qquad j\neq k \ .
\end{eqnarray}
\end{proposition}
\begin{proof}
The distributions $ \mathcal{D}_j$ are integrable due to the Haantjes Theorem \cite{Haa} and are of even rank  by virtue of Corollary \ref{lm:aupari}. Besides,  they are symplectic as
$$
\mathcal{D}_j \cap( \mathcal{D}_j)^\perp \stackrel{\eqref{eq:OmE} }{=}
 \mathcal{D}_j\cap \mathcal{E}_j=\{\boldsymbol{0}\} \ .
$$
Finally, property \eqref{eq:Dsyo} follows from the fact that   $\mathcal{D}_k\subseteq \mathcal{E}_j\stackrel{\eqref{eq:OmE}}{\equiv} D_j^\perp$, if $j\neq k$.
\end{proof}
\begin{corollary}\label{pr:RomH}
Given an Abelian   $\omega \mathscr{H}$ manifold, each integral leaf $D_j$ of the eigen--distribution $ \mathcal{D}_j$,  $j=1,\ldots,s$, is an Abelian   $\omega \mathscr{H}$ submanifold.
\end{corollary}
\begin{proof}
As the Haantjes algebra $\mathscr{H}$ is Abelian by assumption, the submanifold $D_j$ is $\mathscr{H}$-invariant. Therefore, each operator $\boldsymbol{K}\in \mathscr{H}$ can be restricted to $D_j$, giving rise to  a Haantjes algebra of rank $\leq m$. Finally, the compatibility condition \eqref{eq:compOmH} holds since
$$
%\boldsymbol{\check{K}}_\alpha^T \,\check{\boldsymbol{\Omega}}=\check{\boldsymbol{\Omega}}\,\boldsymbol{\check{K}}_\alpha
\boldsymbol{K}^T_{|D_j} \,\boldsymbol{\Omega}_{|D_j}=(\boldsymbol{K}^T\ \,\boldsymbol{\Omega})_{|D_j} =
({\boldsymbol{\Omega}}\,\boldsymbol{{K}})_{|D_j} =
\boldsymbol{\Omega}_{|D_j}\,\boldsymbol{{K}}_{|D_j} \qquad \forall \boldsymbol{{K}} \in  \mathscr{H} \ .
$$
\end{proof}
The $\omega \mathscr{H}$ manifold associated with the integral leaf $D_{j}$ will be denoted by
$(D_j, \omega_{|D_j}, \mathscr{H}_{|D_j})$.

Below, we shall prove that Proposition
 \eqref{pr:FD} and Corollary  \eqref{pr:RomH} hold for the intersections of the distributions $\mathcal{D}_{i_\alpha}^{(\alpha)} $
 defined in Eq. \eqref{def:Dialpha}. Precisely, let us consider the distributions
 \begin{equation} \label{eq:WaInt}
\mathcal{W}_{\mathcal{I}(\alpha)}:=\mathcal{W}_{i_1,\ldots,i_\alpha}:=\mathcal{D}_{i_1}^{(1)} \bigcap  \ldots      \bigcap \mathcal{D}_{i_\alpha}^{(\alpha)} \qquad\alpha=1,\ldots,m,
\end{equation}
which are integrable, being intersections of integrable distributions.
Interestingly, the integral leaves  of such distributions are themselves    $\omega \mathscr{H}$ manifolds.

\begin{proposition}\label{pr:FW}
Given  an Abelian   $\omega \mathscr{H}$ manifold, the distributions $ \mathcal{W}_{\mathcal{I}(\alpha)}$ \eqref{eq:WaInt} that have constant rank  are integrable and of even rank.  Their  integral leaves are symplectic submanifolds of $M$ and are symplectically orthogonal to each other, namely
\begin{eqnarray}
\label{eq:Vsym}
\omega \left(\mathcal{W}_{\mathcal{I}(\alpha)},\mathcal{W}_{\mathcal{I}(\alpha)}\right)&=&symplectic \\
\label{eq:VabsymOrt}
\omega \left(\mathcal{W}_{\mathcal{I}(\alpha)},\mathcal{W}_{\mathcal{I}(\beta)}\right) & = & \boldsymbol{0}   \qquad\qquad \alpha\neq \beta\ .
\end{eqnarray}
Moreover, they are  $\omega\mathscr{H}$ submanifolds of $(M,\omega,\mathscr{H})$.
\end{proposition}
\begin{proof}

We prove the result by induction on the number $\alpha$ of the factors in the intersection \eqref{eq:WaInt}. First, observe that if $\alpha=1$, properties  \eqref{eq:Vsym} and \eqref{eq:VabsymOrt}   are fulfilled according to Proposition \ref{pr:FD}. Let us suppose that these properties are satisfied by the distributions
   \begin{equation} \label{eq:Wam1}
\mathcal{W}_{\mathcal{I}(\alpha-1)}=\mathcal{W}_{i_1,\ldots,i_{\alpha-1}}=\mathcal{D}_{i_1}^{(1)}\bigcap  \ldots      \bigcap \mathcal{D}_{i_{(\alpha-1)}}^{(\alpha-1)}, \qquad \alpha >1 \ ;
\end{equation}
then we will prove that they hold true for the distributions  $\mathcal{W}_{\mathcal{I}(\alpha)}=\mathcal{W}_{\mathcal{I}(\alpha-1)} \bigcap  \mathcal{D}_{i_{\alpha}}^{(\alpha)}$. To this aim,  we consider the  foliation $\mathscr{W}_{\mathcal{I}(\alpha-1)}$ and a generic   leaf $W_{\mathcal{I}(\alpha-1)}$ of it.   Obviously,  the symplectic form $\omega$ can be restricted to $W_{\mathcal{I}(\alpha-1)}$, where it is still non degenerate, by the induction assumption \eqref{eq:Vsym}. Furthermore, as $W_{\mathcal{I}(\alpha-1)}$ is invariant for  the Haantjes algebra $ \mathscr{H}$, each element of $ \mathscr{H}$ can be restricted to $W_{\mathcal{I}(\alpha-1)}$.

Now, let us denote by   $ \mathscr{H}_{|W_{\mathcal{I}(\alpha-1)}}$ the restriction of the Haantjes algebra to $W_{\mathcal{I}(\alpha-1)}$. Over the leaf $W_{\mathcal{I}(\alpha-1)}$ the restricted compatibility condition holds true, as $\forall \boldsymbol{{K}} \in  \mathscr{H}$ we have
\begin{equation} \label{eq:OmHrest}
%\boldsymbol{\check{K}}_\alpha^T \,\check{\Omega}=\check{\Omega}\,\boldsymbol{\check{K}}_\alpha
(\boldsymbol{K}^T)_{|W_{\mathcal{I}(\alpha-1)}}\ \,\boldsymbol{\Omega}_{|W_{\mathcal{I}(\alpha-1)}} =(\boldsymbol{K}^T\ \,\boldsymbol{\Omega})_{|W_{\mathcal{I}(\alpha-1)}} =
({\boldsymbol{\Omega}}\,\boldsymbol{{K}})_{|W_{\mathcal{I}(\alpha-1)}} =
\boldsymbol{\Omega}_{|W_{\mathcal{I}(\alpha-1)}}\,\boldsymbol{{K}}_{ |W_{\mathcal{I}(\alpha-1)}} \ .
\end{equation}
Therefore, the triple $(W_{\mathcal{I}(\alpha-1)},\omega_{|W_{\mathcal{I}(\alpha-1)}}, \mathscr{H}_{|W_{\mathcal{I}(\alpha-1)}})$ is a $\omega \mathscr{H}$ manifold.
\par
Notice that the generalized eigendistribution $\mathcal{W}_{\mathcal{I}(\alpha)}$ can be characterized at any point by $\mathcal{W}_{\mathcal{I}(\alpha)}(\boldsymbol{x})=
 Ker\big(\boldsymbol{{K}}_{\alpha}-l_{i_\alpha}^{(\alpha)}\boldsymbol{I}\big)^{\rho_i}_{|W_{\mathcal{I}(\alpha-1)}}(\boldsymbol{x})$.
 Consequently, the rank of $\mathcal{W}_\alpha$ must be even due to Corollary \ref{lm:aupari}. Moreover,
 \begin{eqnarray}
 \nn\mathcal{W}_{\mathcal{I}(\alpha)}^\perp(\boldsymbol{x}) &=& \left(Ker\big(\boldsymbol{{K}}_{\alpha}-l_{i_\alpha}^{(\alpha)}\boldsymbol{I}\big)^{\rho_i}_{|W_{\mathcal{I}(\alpha-1)}}\right)^\perp(\boldsymbol{x}) \\ \nn  &=&
\boldsymbol{ P}_{|W_{\mathcal{I}(\alpha-1)}}\left(\big(Ker\big(\boldsymbol{{K}}_{\alpha}-l_{i_\alpha}^{(\alpha)}\boldsymbol{I}\big)^{\rho_i}_{|W_{\mathcal{I}(\alpha-1)}}\big)^\circ\right)(\boldsymbol{x})
  = ( \mathcal{E}_{i_\alpha}^{(\alpha)})_{|W_{\mathcal{I}(\alpha)}} (\boldsymbol{x})\ ,
 \end{eqnarray}
 where $\boldsymbol{P}_{|W_{\mathcal{I}(\alpha-1)}}=( \boldsymbol{\Omega}_{|W_{\mathcal{I}(\alpha-1)}})^{-1}$.
 Therefore,
  \[
  \mathcal{W}_{\mathcal{I}(\alpha)} \bigcap  \mathcal{W}_{\mathcal{I}(\alpha)}^\perp= (\mathcal{D}_{i_{m}}^{(m)})_{|W_{\mathcal{I}(\alpha-1)}}  \bigcap ( \mathcal{E}_{i_\alpha} ^{(\alpha)})_{|W_{\mathcal{I}(\alpha-1)}}=\{\boldsymbol{0}\}
  \]
so that $\mathcal{W}_{\mathcal{I}(\alpha)}$ is a symplectic foliation of $M$. Observe that  condition \eqref{eq:OmHrest} also holds   over any leaf $W_{\mathcal{I}(\alpha)}$. Therefore, we conclude that
each integral leaf $W_{\mathcal{I}(\alpha)}$ of the distributions $ \mathcal{W}_{\mathcal{I}(\alpha)}$ is again a   $\omega \mathscr{H}$ manifold; it will be denoted by
$(W_{\mathcal{I}(\alpha)}, \omega_{|W_{\mathcal{I}(\alpha)}}, \mathscr{H}_{|W_{\mathcal{I}(\alpha)}})$.
 \par
To prove relation \eqref{eq:VabsymOrt}, let us compare $\mathcal{W}_{\mathcal{I}(\alpha)}=\mathcal{D}_{i_1}^{(1)} \bigcap  \ldots      \bigcap \mathcal{D}_{i_\alpha}^{(\alpha)} $ with $\mathcal{W}_{\mathcal{J}(\alpha)}=\mathcal{D}_{j_1}^{(1)} \bigcap  \ldots      \bigcap \mathcal{D}_{j_\alpha}^{(\alpha)} $. We distinguish two possibilities: If, $i_1\neq j_1, \ldots,i_\alpha\neq j_\alpha$, it follows that  $\mathcal{W}_{\mathcal{I}(\alpha)}\subseteq \mathcal{W}_{\mathcal{J}(\alpha)}^\perp$
as  $\mathcal{D}_{i_1}^{(1)}\subseteq   \mathcal{E}_{j_1}^{(1)}=\mathcal{D}_{j_1}^\perp, \ldots,  \mathcal{D}_{i_\alpha}^{(1)}\subseteq   \mathcal{E}_{j_\alpha}^{(\alpha)}= (\mathcal{D}_{j_\alpha}^{(\alpha)})^\perp$. Consequently, Eq. \eqref{eq:VabsymOrt} holds true. Instead, if some indices coincide, say
$i_1= j_1, \ldots, i_s=j_s$, but $i_{s+1}\neq j_{s+1},\ldots,i_\alpha\neq j_\alpha$, we have:
\[\mathcal{W}_{\mathcal{I}(\alpha)}= (\mathcal{D}_{i_{s+1}}^{(1)} \bigcap  \ldots      \bigcap \mathcal{D}_{i_\alpha}^{(\alpha)})_{| \mathcal{D}_{i_{1}}^{(1)} \bigcap  \ldots      \bigcap \mathcal{D}_{i_s}^{(s)}}, \quad \mathcal{W}_{\mathcal{J}(\alpha)}=( \mathcal{D}_{j_{s+1}}^{(1)} \bigcap  \ldots      \bigcap \mathcal{D}_{j_\alpha}^{(\alpha)})_{| \mathcal{D}_{i_{1}}^{(1)} \bigcap  \ldots      \bigcap \mathcal{D}_{i_s}^{(s)}}\ , \]
therefore we go back to the previous case. 

\end{proof}

\subsection{Darboux--Haantjes coordinates for $\omega\mathscr{H}$ manifolds}

We wish to show that among the families of  Haantjes coordinates defined in Section \ref{ss:Hcoor}, there exist sets of coordinates in which the symplectic form $\omega$ takes specifically the \textit{Darboux form}. In order to construct such Darboux-Haantjes coordinates, we first state the following

\begin{lemma}\label{lm:Omegaind}
Let $(M,\omega)$ be a symplectic manifold and let
\begin{equation} \label{eq:Hchart}
\{U,(x^{1,1},\ldots,x^{1,2 \sigma_{1}}; \ldots; x^{a,1},\ldots,  x^{a,2 \sigma_{a}};\ldots ; x^{v,1},\dots, x^{v,2 \sigma_{v}})\}
\end{equation}
be a local chart in $M$ in which $\omega$ takes locally the  block-form expression \begin{equation}\label{eq:HaanO}
\omega= \sum_{i,j=1}^{2 \sigma_1} \omega_{i j}^{(1)}  \rd x^{1,i} \wedge \rd   x^{1,j}+\ldots +
\sum_{i,j=1}^{2 \sigma_a}\omega_{i j}^{(a)}  \rd x^{a,i} \wedge \rd   x^{a,j}+ \ldots+
\sum_{i,j=1}^{2 \sigma_v}\omega_{i j}^{(v)}  \rd x^{v,i} \wedge \rd   x^{v,j}\ .
\end{equation}
Then, its components satisfy the equations
\begin{equation}\label{eq:omind}
\frac{\partial  \omega_{ij}^{(a)}}{\partial x^{b,k}}=0 \qquad\quad a\neq b\ , k=1,\ldots, 2 \sigma_b \ .
\end{equation}
\end{lemma}
\begin{proof}
It is an immediate consequence of the fact that $\omega$ is a closed form.
\end{proof}

\begin{theorem} \label{th:gHchart}
Let $(M,\omega,\mathscr{H})$ be an Abelian $\omega\mathscr{H}$ manifold of class $m$. Among the sets of Haantjes coordinates  for $\mathscr{H}$, there exist local  charts in $U\subset M$ with Darboux coordinates
\begin{equation} \label{eq:gDHch}
(\boldsymbol{q}^{1},\boldsymbol{p}_{1},\ldots,\boldsymbol{q}^{v},\boldsymbol{p}_{v}) \ ,
\end{equation}
where
$$
(\boldsymbol{q}^{1},\boldsymbol{p}_{1})=(q^{1,1},\ldots,q^{1,\sigma_{1}}, p_{1,1},\ldots,p_{1,\sigma_{1}}); \ldots;
 (\boldsymbol{q}^{v},\boldsymbol{p}_{v})=(q^{v,1},\dots, q^{v,\sigma_{v}},p_{v,1},\dots, p_{v,\sigma_{v}}),
$$
such that  $\omega$ takes the Darboux form
\begin{equation} \label{eq:gDo}
\omega=
%\sum_{i=1}^{r_1}  \rd p_{1,i} \wedge \rd   q_{1,i}+\ldots +
%\sum_{i=1}^{r_a} \rd p^{a,i} \wedge \rd   q^{a,i}+ \ldots+
%\sum_{i=1}^{r_v} \rd p^{v,i} \wedge \rd   q^{v,i} \ .
\sum_{a=1}^{v}\sum_{i=1}^{\sigma_a}  \rd p_{a,i} \wedge \rd   q^{a,i} \ .
\end{equation}
Here $\sigma_{a}= 1/2$ $rank (\mathcal{V}_a)$, and $\mathcal{V}_a$ $(a=1,\ldots,v)$ are the distributions defined in Eq. \eqref{eq:Va}.
\end{theorem}
\begin{proof}
Let us consider a  Haantjes chart for $\mathscr{H}$. It is  adapted to the   decomposition \eqref{eq:TMinters} and has the form \eqref{eq:Hchart}. In such a chart, each element of the Haantjes algebra takes a
block-diagonal form, due to Theorem \ref{th:Halg}. Besides, the distributions $\mathcal{V}_a$ satisfy the properties stated in  Proposition \ref{pr:FW} as they correspond to the  distributions $\mathcal{W}_\alpha$ when $\alpha=m$. Consequently, in each Haantjes chart, the symplectic form $\omega$ takes the block form \eqref{eq:HaanO} as
$$
\omega\left(\frac{\partial}{\partial x^{a,i}},\frac{\partial}{\partial x^{b,j}}\right)=0\qquad\qquad
 a\neq b,\  i=1,\ldots,2\sigma_a, \ j=1,\ldots, 2 \sigma_b \ ,
$$
thanks to Eq. \eqref{eq:VabsymOrt}. Thus, its components satisfy property \eqref{eq:omind}. Then, over the  leaves of every distribution $\mathcal{V}_a$  one can find Darboux coordinates for the restriction of the symplectic form, which is still symplectic, due to Eq. \eqref{eq:Vsym}. Therefore, one can collect  such coordinates to obtain a local chart in $M$ like \eqref{eq:gDHch},
 adapted to the decomposition \eqref{eq:TMinters}. In this chart, the symplectic form $\omega$ takes  the Darboux form \eqref{eq:gDo} and each Haantjes operator $\boldsymbol{K} \in \mathscr{H}$ still possesses  a block-diagonal form.
  \end{proof}
\begin{definition}
Given an $\omega \mathscr{H}$ manifold, the local coordinates where all Haantjes operators take simultaneously a block-diagonal form and, at the same time, the symplectic form takes
the Darboux form \eqref{eq:gDo} are called Darboux--Haantjes (DH) coordinates.
\end{definition}

\begin{definition}
An $\omega \mathscr{H}$ manifold $(M, \omega, \mathscr{H})$ will be said to be semisimple if $ \mathscr{H}$ is a semisimple Haantjes algebra.
\end{definition}

\begin{corollary}
In a semisimple Abelian $\omega\mathscr{H}$ manifold $(M, \omega ,\mathscr{H})$,   each $\boldsymbol{K} \in \mathscr{H}$ takes the diagonal form \eqref{eq:Hdiag} in every set of  Darboux--Haantjes coordinates.
\end{corollary}
\begin{proof}
Given a semisimple Abelian Haantjes algebra $\mathscr{H}$, in each set of Haantjes coordinates the operators $\boldsymbol{K} \in \mathscr{H}$ take a diagonal form. Furthermore, due to Theorem \ref{th:gHchart}, among Haantjes coordinates there exist local charts in which the symplectic form takes the Darboux form \eqref{eq:gDo}. In such DH coordinates, every $\boldsymbol{K} \in \mathscr{H}$ takes the form \eqref{eq:Hdiag} as a consequence of the compatibility condition \eqref{eq:compOmH}.
\end{proof}

\subsection{Haantjes chains for $\omega\mathscr{H}$ manifolds}

The relevance of Haantjes chains in the theory of  $\omega\mathscr{H}$ manifolds is due to the following
\begin{lemma} \label{lm:MHchainInv}
 Let $(M,\omega,\mathscr{H})$ be an Abelian $\omega\mathscr{H}$ manifold. Then the potential functions $H_\alpha$,  whose differentials belong to all Haantjes chains  generated by a single function $H$, are in involution among each others and with $H$,
w.r.t. the Poisson bracket defined by the Poisson operator
 $\boldsymbol{P}=\boldsymbol{\Omega}^{-1}$.
\end{lemma}
\begin{proof}
In fact, we have
\begin{equation}
\{H_\alpha, H_\beta\}=<dH_\alpha, \boldsymbol{P}\, dH_\beta>=<\boldsymbol{K}_\alpha^T dH,\boldsymbol{P}\boldsymbol{K}^T_\beta dH>=
<dH,\boldsymbol{K}_\alpha \boldsymbol{P} \boldsymbol{K}_\beta^T dH>\stackrel{Prop.\, \ref{pr:ss}}{=}0,
\end{equation}
for  $\alpha,\beta=1,\ldots, m$.
The involution of $H_\alpha$ with $H$ can be proved  analogously.
\end{proof}

\par
An interesting problem is to study both Lagrangian eigendistributions and their associated Lagrangian foliations in an $\omega \mathscr{H}$ manifold. To this aim,
let us  compare  the distribution   $\mathcal{D}_H$, spanned by  the vector fields annihilated by the codistribution
$\mathcal{D}_H^\circ$ defined by Eq. \eqref{eq:codKH},
 with  the distribution, denoted by $\mathcal{D}^\perp_H$, of the vector fields symplectically orthogonal to those of $\mathcal{D}_H$. It is known that   $\mathcal{D}^\perp_H=\boldsymbol{P} (\mathcal{D}_H^\circ)%=Ker(\boldsymbol{\Omega}_{|\mathcal{D}_H})
 $.
 Taking into
account Eq. \eqref{eq:codKH} and Proposition \ref{pr:ss},
%the compatibility condition \eqref{eq:compOmH},
it turns out that
\begin{equation}
 \mathcal{D}_H^\perp=Span\{ \boldsymbol{K}_1\, X_H,  \boldsymbol{K}_2\, X_H, \ldots , \boldsymbol{K}_{m}\, X_H\},
 \end{equation}
 where $X_H=\boldsymbol{P}\,\rd H$ is the Hamiltonian vector field with Hamiltonian function $H$.
Thus,   we deduce the following result.
\begin{proposition} \label{pr:LagD}
Let  $(M, \omega, \mathscr{H})$ be a $2n$-dimensional Abelian $\omega \mathscr{H}$ manifold, and $H$ be a smooth function on $M$. The relation
\begin{equation} \label{eq:DHortinclDH}
\mathcal{D}^\perp_H\subseteq \mathcal{D}_H \
\end{equation}
holds, namely, $\mathcal{D}_H$  is a coisotropic distribution  and  $\mathcal{D}^\perp_H$ is an isotropic one. Moreover, if
\[
rank(\mathcal{D}_H)= n,
\]
they  coincide and form a Lagrangian distribution.
\end{proposition}
\begin{proof}
Each vector field belonging to $\mathcal{D}_H^\perp$ is annihilated by any $1$-form belonging to $\mathcal{D}_H^\circ$ as
$$
<\boldsymbol{K}_{\alpha}^T \rd H ,  \boldsymbol{K}_{\beta} X_H>=<\rd H ,\boldsymbol{K}_{\alpha} \boldsymbol{K}_{\beta}\boldsymbol{P} \rd H>\stackrel{Prop. \ref{pr:ss}}{=}0, \qquad  \forall \boldsymbol{K}_{\alpha}, \boldsymbol{K}_{\beta}\in \mathcal{\mathscr{H}}.
$$
If $rank(\mathcal{D}_H)= n$, we also have $rank\big(\mathcal{D}_H^\circ\big )=n=
rank\big(P\big(\mathcal{D}_H^\circ)\big)$. Therefore,  $\mathcal{D}^\perp_H = \mathcal{D}_H$.
\end{proof}
\begin{proposition}\label{th:Lagr}
 Let  $(M, \omega,  \mathscr{H})$ be a $2n$-dimensional Abelian $ \omega \mathscr{H}$ manifold of class $m$ and $H$ be a smooth function that generates a Haantjes chain of length $m$. Then, the distribution
$\mathcal{D}_H$ (rs. $\mathcal{D}_H^\perp$) are integrable distributions and have a coisotropic (rs. isotropic)  foliation that we denote by $ \mathscr{D}_H$ (rs. $ \mathscr{D}_H^\perp$).  In particular, if $m = n$,
$ \mathscr{D}_H =
\mathscr{D}_H^\perp$ is a  Lagrangian foliation.
\par
\end{proposition}
\begin{proof}
Let  $\{H_1,H_2,\ldots, H_m\}$ be the potential functions of the Haantjes chain generated by $H$. They are in involution, due to Lemma \ref{lm:MHchainInv}; then, they are integral functions of  the coisotropic foliation  $\mathscr{D}_H$. Consequently, the isotropic distribution $\mathcal{D}_H^\perp$ is also integrable. In particular, if
  $m= n$, then  $rank(\mathcal{D}_H)=rank(\mathcal{D}_H^\perp)=n$. Thus, from Eq. \eqref{eq:DHortinclDH} it follows that $\mathcal{D}_H = \mathcal{D}_H^\perp$.
  \end{proof}
  The following theorem  clarifies the compatibility between a $\omega  \mathscr{H}$ manifold  and a set of functions in involution.
\begin{theorem} \label{pr:LHD}
Let   $(M, \omega,  \mathscr{H})$ be a
  $2n$-dimensional $\omega \mathscr{H}$ Abelian manifold of class $m$. Let $\{H_1,H_2,\ldots, H_m\}$
	be a set of independent functions  in involution and  $\mathcal{D}^\circ$ denote the codistribution spanned by their differentials.  The functions  $\{H_1,H_2,\ldots, H_m\}$  form  a Haantjes chain, generated by a smooth function $H$ in involution with  them,  if and only if $H$  satisfies the condition
  \begin{equation} \label{eq:LtH}
\mathcal{D}_H^\circ = \mathcal{D}^\circ\ .
  \end{equation}
\end{theorem}
\begin{proof}
Condition \eqref{eq:LtH} is equivalent to require that  $\mathcal{D}_H^\circ \subseteq \mathcal{D}^\circ$ as they both have, by assumption,
the same rank $m$. Thus, if such an inclusion is satisfied,  $\mathcal{D}_H$ is integrable and its foliation is  a coisotropic foliation due to Proposition \ref{pr:LagD}. Moreover,  by virtue of  Theorem \ref{th:LHint}, it follows that  the function $H$  generates   the Haantjes chain given by $\{\rd H_1, \rd H_2, \ldots, \rd H_m\}$.
\par
Conversely, if $\{H_1,H_2,\ldots, H_m \}$ are the potential functions of   a Haantjes chain generated by $H$, as a consequence of  Eq. \eqref{eq:MHchain} it follows that $\tilde{\boldsymbol{K}}_\alpha^T \rd H=\rd H_\alpha \in \mathcal{D}^\circ$, for $ \alpha=1, \ldots, m$. Then,  condition \eqref{eq:LtH} is satisfied.
\end{proof}

\subsection{Cyclic $\omega \mathscr{H}$  manifolds }  \label{sec:cOmH}
A particular, especially relevant family of Abelian $\omega \mathscr{H}$ manifolds  is represented by the class of
symplectic manifolds endowed with a cyclic Haantjes algebra of rank $m$ (see Sec. \ref{sec:cHa}). This algebra is generated by  a single Haantjes operator $\boldsymbol{L}$, assumed to satisfy the compatibility condition
\eqref{eq:compOmH}. Taking into account Eq. \eqref{eq:Hg}, it is easy to prove that such condition holds true also for each $\boldsymbol{K}\in \mathcal{L}(\boldsymbol{L})$. In fact, representing $\boldsymbol{K}$ as a polynomial field in $\boldsymbol{L}$, we have
$$
\boldsymbol{\Omega} \,\boldsymbol{K} =\boldsymbol{\Omega}\, p_{\boldsymbol{K} }(\boldsymbol{x},\boldsymbol{L}) =p_{\boldsymbol{K} }(\boldsymbol{x},\boldsymbol{L^T})\boldsymbol{\Omega}=\boldsymbol{K} ^T\boldsymbol{\Omega} \ .
$$
 We shall say that these manifolds are \emph{cyclic} $\omega  \mathscr{H}$ manifolds.
\par
For a  cyclic $\omega  \mathscr{H}$ manifold one can construct a special class of  Haantjes chains. Indeed, in this context, Theorem \ref{th:LHint} amounts to say that a function $H$ generates the
Haantjes chain
\begin{equation} \label{defi:gLc}
\rd H_\alpha =\boldsymbol{K}^T_\alpha dH= p_\alpha (\boldsymbol{L}^T) dH \qquad\qquad \alpha=1,\ldots,m
\end{equation}
if and only if the codistribution
\begin{equation}
\mathcal{D}_H^\circ =Span\{\rd H, \boldsymbol{L}^T \rd H, \ldots, (\boldsymbol{L}^T)^{m-1} \rd H\}
\end{equation}
is integrable.
The chain \eqref{defi:gLc} will be said to be a \emph{cyclic Haantjes  chain}.
\par
An important class  of cyclic $\omega \mathscr{H}$ manifolds is represented by    $\omega N$ manifolds  \cite{MM,Mnoi}. Let $(M,\omega,\boldsymbol{N})$ be a symplectic-Nijenhuis (or $\omega N$) manifold, that is,  a manifold endowed with a symplectic form $\omega$ and a Nijenhuis operator
$\boldsymbol{N}$ that satisfy the following compatibility conditions
\begin{eqnarray}
\label{eq:OmNa}
\boldsymbol{\Omega}\boldsymbol{N}-\boldsymbol{N}^T\boldsymbol{\Omega}&=& \boldsymbol{0} \ , \\
\label{eq:OmNd}
\rd(\boldsymbol{\Omega} \boldsymbol{N})&=&\boldsymbol{0} \ .
\end{eqnarray}
Here
\beq
\rd\boldsymbol{\Omega}(X,Y)=\mathcal{L}_X(\boldsymbol{\Omega})Y-\mathcal{L}_Y(\boldsymbol{\Omega} )X+\rd<\boldsymbol{\Omega} X,Y> +  \boldsymbol{\Omega}\, [X,Y]\ , \quad \forall X,Y\in TM
\eeq
and $\mathcal{L}_X$ denotes the Lie derivative of a tensor field with respect to the vector field $X$.
\begin{example} \label{ex:OmN}
Let us suppose  that the Nijenhuis operator $\boldsymbol{N}$ has its minimal polynomial  of degree $m$.
Then, the $\omega N$  manifold $M$ has a \emph{standard} $ \omega \mathscr{H}$ structure, given by
$$
(M, \omega,  \boldsymbol{K}_1= \boldsymbol{I},  \boldsymbol{K}_2= \boldsymbol{N},\ldots,   , \boldsymbol{K}_{m}= \boldsymbol{N}^{m-1}) \ ,
$$
with a Haantjes algebra of rank  $m\leq dim(M)$.
In fact, each Nijenhuis operator $ \boldsymbol{N}$ is also a Haantjes operator; therefore, it generates the  cyclic Haantjes algebra $\mathcal{L}(\boldsymbol{N})$. Besides, the algebraic compatibility condition \eqref{eq:OmNa} assures that for all Haantjes operators
\begin{equation} \label{eq:KOmN}
\boldsymbol{K}=p_{\boldsymbol{K} }(\boldsymbol{x},\boldsymbol{N})=\sum_{i =0} ^{m-1}  a_i(\boldsymbol{x})\,\boldsymbol{N}^i ,
\end{equation}
condition iii) of Definition \ref{def:oHman} is fulfilled.
 \par
 Furthermore, the differential  condition \eqref{eq:OmNd} implies that for all $\boldsymbol{K}$, the following relation 
 \begin{equation}
\rd (\boldsymbol{\Omega},\boldsymbol{K})(X,Y)= \sum_{i=0}^{m-1}\rd a_i \wedge (\boldsymbol{\Omega} \boldsymbol{N}^i)(X,Y) \qquad\qquad  \forall X,Y \in TM
\end{equation}
holds.
\end{example}
\par
Note that the cyclic  Haantjes chains  on $\omega N$ manifolds coincide with the notion of Nijenhuis chains  \cite{FMT} and  of generalized Lenard chains, defined in  \cite{TT,TGalli12}. In particular, if   $\tilde{ \boldsymbol{K}}_\alpha=\boldsymbol{N}^{\alpha-1} $, the cyclic  Haantjes chains coincide with  the classical Lenard-Magri chains \cite{Mnoi} (see also the analysis of the Benenti systems in terms of Killing-St\"ackel in  algebras \cite{BCRframe}).
\par

We shall now deepen into the relationship between the notion of cyclic Haantjes algebras and the theory of $\omega\mathscr{H}$ manifolds.

\begin{proposition} \label{pr:Hmg}
Every semisimple $2n$-dimensional Abelian $\omega \mathscr{H}$ manifold $(M,\omega,\mathscr{H})$ of class $m$ is a cyclic $\omega \mathscr{H}$ manifold. Moreover,  each generator  of $\mathscr{H}$ that belongs to $\mathscr{H}$ must have $h$ distinct eigenvalues, with $h=m$ if $\boldsymbol{I}\in \mathscr{H}$,  or
$h=m+1$ otherwise. In particular, if $\mathscr{H}$  has  rank $m=n$,  an operator $\boldsymbol{K}\in \mathscr{H}$  is a  generator of  $\mathscr{H}$ if and only if it is  maximal. In this case,  $\boldsymbol{I}\in \mathscr{H}$.
\par
Finally, if a  $2n$-dimensional $\omega \mathscr{H}$ manifold   of class $n$ is non-semisimple,  then none of its generators can be maximal.
\end{proposition}
\begin{proof}
 The first and the second statement are a direct consequence   of Proposition \ref{pr:CGKcoor}. In particular, if $h=n$, the cyclic algebra $\mathcal{L}(\boldsymbol{K})$ generated by a maximal operator $\boldsymbol{K} \in \mathscr{H}$  has rank $n$; therefore, it coincides with $\mathscr{H}$; thus,
$\boldsymbol{I}\in \mathscr{H}$.
Conversely, every generator of  $\mathscr{H}$, being semisimple and having its minimum polynomial of degree $n$ (by virtue of the second statement), is maximal.
\par
Finally, we observe that generators of non-semisimple Haantjes algebras cannot be maximal, since maximal operators,  due to Lemma \ref{lm:max}, are semisimple.
\end{proof}
The following Proposition, which specializes the results of Proposition \ref{pr:CGKcoor}  and Corollary \ref{cor:v=m}  to the case of a semisimple $2n$-dimensional Abelian $\omega \mathscr{H}$ manifold  of class $m$, presents an explicit construction of  a generator of $\mathscr{H}$; in particular, this generator  can be  chosen to be a Nijenhuis operator.

\begin{proposition} \label{th:HgDH}
Let  $(M,\omega,\mathscr{H})$ be an Abelian $2n$-dimensional semisimple $\omega\mathscr{H}$ manifold of class $m$. Let us consider the spectral decomposition \eqref{eq:TMinters}  and a Darboux-Haantjes chart $\{ U,(q^{a,j_a},p_{a,j_a}) \}$, $a=1, \ldots,v$, $j_a=1,\ldots , \sigma_a=\frac{1}{2} rank( \mathcal{V}_{a})$,  adapted to the decomposition \eqref{eq:TMinters}, namely
\begin{equation}\label{eq:HchartK}
\mathcal{V}_{a}=Span\left\{\frac{\partial}{\partial q^{a,j_a }},\frac{\partial}{\partial p_{a,j_a }}\right\} \ .
\end{equation}
Then, if $m\leq v\leq n$,  
each operator defined by
 \begin{equation}\label{eq:Lgen}
\boldsymbol{L}=\sum_{a=1}^v \lambda_{a}(\boldsymbol{q,p})\sum_{j_a=1}^{\sigma_a}\bigg(\frac{\partial}{\partial q^{a,j_a }}\otimes \rd q^{a,j_a }+\frac{\partial}{\partial p_{a,j_a }}\otimes \rd p_{a,j_a }\bigg)
 \end{equation}
 is a generator of $\mathscr{H}$, provided that $\{\lambda_{1}(\boldsymbol{q, p}), \ldots , \lambda_{v}(\boldsymbol{q, p})\}$ are arbitrary,
pointwise distinct smooth functions. Therefore, every operator $\boldsymbol{K}\in \mathscr{H}$ can be written in the form
 \begin{equation}\label{eq:KLagr}
 \boldsymbol{K}=\sum _{i=1}^m l_{i } \frac{\Pi_{j\neq i}(\boldsymbol{L}-\lambda_j \boldsymbol{I})}{\Pi_{j\neq i}(\lambda_i-\lambda_j )} \ , 
 % \qquad\qquad \alpha=0,\ldots, n-1
\end{equation}
where $l_i=l_i(\boldsymbol{q, p})$ are the eigenvalue fields of $\boldsymbol{K}$.
In particular, if
\begin{equation}\label{eq:64}
\lambda_a(\boldsymbol{q,p})=\lambda_a(q^{a,1},p_{a,1},\ldots, q^{a,\sigma_a}, p_{a,\sigma_a}) \quad a=1,\ldots,v\ ,
\end{equation}
the generator $\boldsymbol{L}$ is a Nijenhuis operator. This  operator endows the manifold M with a standard ωN structure, since it satisfies both conditions \eqref{eq:OmNa} and \eqref{eq:OmNd}, as it can be proved by a direct calculation
\par
If $v<m\leq n$,  by means of  a further decomposition of $\mathcal{V}_a$  we can re-obtain the case $m\leq v$. 

Finally, a generator $\boldsymbol{L}$ is maximal if and only if $m=v=n$. In this case,  $\boldsymbol{I}\in \mathscr{H}$.
\end{proposition}
\begin{proof}
 The inequality $v>n$ cannot hold in an $\omega \mathscr{H}$ manifold of class $n$ because,  due to Proposition \ref{pr:FW},  the rank of each distribution $\mathcal{V}_a$ can not be less than $2$. Therefore, we have that $v\leq n$. Let us consider a Darboux-Haantjes chart of the form \eqref{eq:HchartK},  adapted to the decomposition \eqref{eq:TMinters}  (whose existence is guaranteed by Theorem \ref{th:gHchart}). In this chart, if $v\geq m$, the generator \eqref{eq:Ldiag} takes the form \eqref{eq:Lgen} and it can be a Nijenhuis operator, providing that its eigenvalues fields are chosen of the form \eqref{eq:64}.
 
When $v<m$, a generator can still be constructed since, as in Proposition \ref{pr:CGKcoor},  we can further decompose
%, in $\binom{r_a}{2}$ different ways,
each  distribution $\mathcal{V}_a$ 
%(which is of even  rank thanks to Proposition \ref{pr:FW} and therefore of rank $r_a\geq 2$) 
 into a direct sum of  $2$-dimensional sub-distributions
\[
\mathcal{V}_a=\bigoplus_{i=1}^{\sigma_a} Span\left\{ \frac{\partial}{\partial q^{a,i}}, \frac{\partial}{\partial p_{a,i}} \right\} \qquad a=1,\ldots,v \ .
\]
\end{proof}

\section{Complete Integrability and Haantjes structures}
\par
 The aim of this Section is to prove the  main result of this paper, namely the equivalence between the existence of an $\omega \mathscr{H}$ structure associated with a Hamiltonian system and its complete integrability in the sense of Liouville and Arnold. In particular, we shall prove formula \eqref{eq:LA}, which relates the Haantjes geometry of a given integrable system with its dynamics. Also, we will show in a specific example how the Haantjes formulation overcomes, for the vector fields under scrutiny, an obstruction to the existence of a classical Lenard chain pointed out by R. Brouzet.
 \par
From now on, we will work with the distinguished basis
$\{\tilde{\boldsymbol{K}}_1,\ldots,\tilde{\boldsymbol{K}}_m \}$, and we will drop off the $tilde$ over $\tilde{\boldsymbol{K}}_\alpha$ for the sake of simplicity.

 \subsection{Haantjes theorem for integrable systems}
 We propose a characterization of the notion of  integrability in the sense  of Liouville--Arnold in terms of    $\omega \mathscr{H}$ manifolds.
\begin{theorem}[Liouville-Haantjes]\label{th:HAL}
Let $M$ be a $2n$-dimensional Abelian $\omega \mathscr{H}$ manifold of class $n$ and $\{H_1,H_2, \ldots, H_n\}$ be smooth potential functions of a Haantjes chain generated by a function $H$.  Then, the foliation generated by these functions is Lagrangian. Consequently,  each Hamiltonian system, with Hamiltonian functions $H$ and $H_\alpha$, $1\le \alpha \le n$, is integrable by quadratures.
\par
Conversely, let us consider  a completely integrable system with $n$ degrees of freedom, defined by a Hamiltonian $H$ and a set of $n$  integrals of motion  $\{H_1,\ldots,H_n \}$, in involution  and independent among each other. Let $\{(J_k,\phi_k)\}$,   $k =1,\ldots, n$, denote a set of action-angle variables, with associated frequencies $\nu_{k}(\boldsymbol{J}):=\frac{\partial{H}}{\partial J_{k}}$. If $H$ is non degenerate, that is
\begin{equation}\label{eq:And}
\det\left(\frac{\partial \nu_k}{\partial J_i} \right)=\det\left(\frac{\partial^2 H}{\partial J_i \partial J_k} \right)\neq 0 \ ,
\end{equation}
then $M$ admits, in any tubular neighbourhood of an Arnold  torus, a semisimple $\omega \mathscr{H}$ structure  whose Haantjes algebra is generated by the operators
\beq \label{eq:LAA}
\boldsymbol{K}_\alpha=\sum _{i=1}^n \frac{\nu_i^{(\alpha)}(\boldsymbol{J})}{\nu_i (\boldsymbol{J})}\bigg (\frac{\partial}{\partial J_i}\otimes \rd J_i +\frac{\partial}{\partial \phi_i}\otimes \rd \phi_i \bigg )\quad \alpha=1,\ldots,n
\ ,
 \eeq
 where $\nu_i^{(\alpha)}(\boldsymbol{J})$ are the frequencies of the $(\alpha)-nth$ linear flow.
 \end{theorem}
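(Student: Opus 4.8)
The statement splits into two independent implications, and I would handle them separately. For the direct implication, the plan is to show that the functions of a Lenard-Haantjes chain are pairwise in involution; together with their assumed functional independence this makes their common level sets a Lagrangian foliation, whereupon the Liouville-Arnold theorem yields integrability by quadratures for each $H_j$. The involution is the only substantial point, and I expect it to fall out cleanly from the skew-symmetry collected in Proposition \ref{pr:ss}. Writing the Hamiltonian vector fields as $X_{H_i}=\boldsymbol{P}\,\rd H_i$ and using $\rd H_i=\tilde{\boldsymbol{K}}_{i-1}^T\rd H$ from \eqref{eq:LHc}, I would compute
\begin{equation}
\{H_i,H_j\}=\langle \rd H_i,\boldsymbol{P}\,\rd H_j\rangle=\langle \rd H,\tilde{\boldsymbol{K}}_{i-1}\boldsymbol{P}\,\tilde{\boldsymbol{K}}_{j-1}^T\rd H\rangle .
\end{equation}
Since the $\tilde{\boldsymbol{K}}_\alpha$ belong to the Haantjes module generated by the $\boldsymbol{K}_\alpha$, the composite $\tilde{\boldsymbol{K}}_{i-1}\boldsymbol{P}\,\tilde{\boldsymbol{K}}_{j-1}^T$ expands into a function-linear combination of the skew-symmetric operators $\boldsymbol{K}_\gamma\boldsymbol{P}\,\boldsymbol{K}_\delta^T$ of Proposition \ref{pr:ss}, hence is itself a skew-symmetric bivector; such a bivector has zero contraction with a single covector in both slots, whence $\{H_i,H_j\}=0$ for all $i,j$. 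This is precisely the involution invoked in the proof of Theorem \ref{th:LHint}.

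For the converse, the plan is to realize the prescribed operators \eqref{eq:LAA} explicitly in action-angle coordinates and to check the axioms of Definition \ref{def:HM} one by one. By the Arnold-Liouville theorem there is, in a tubular neighborhood of the torus, a chart $(\boldsymbol{J},\boldsymbol{\phi})$ with $\omega=\sum_i\rd J_i\wedge \rd\phi_i$ and with every integral depending on the actions alone, $H_\beta=H_\beta(\boldsymbol{J})$, so that $\nu_i^{(\beta)}=\partial H_\beta/\partial J_i$. In this chart each $\boldsymbol{K}_\alpha$ is diagonal, acting as the scalar $\nu_i^{(\alpha+1)}/\nu_i$ on the symplectic $2$-plane $\mathrm{Span}\{\partial_{J_i},\partial_{\phi_i}\}$. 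Consequently: $\boldsymbol{K}_0=\boldsymbol{I}$ because $\nu_i^{(1)}=\nu_i$; the Haantjes tensor of each $\boldsymbol{K}_\alpha$, and of every element $\sum_\alpha a_\alpha\boldsymbol{K}_\alpha$ of the module (again diagonal in this chart), vanishes by Proposition \ref{pr:Hd}; the operators commute since they are simultaneously diagonal; and $\boldsymbol{\Omega}\boldsymbol{K}_\alpha=\sum_i(\nu_i^{(\alpha+1)}/\nu_i)\,\rd J_i\wedge \rd\phi_i$ is manifestly a $2$-form, which is exactly the compatibility condition \eqref{eq:LOcomp}.

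It then remains to tie this structure to the given system, and here the nondegeneracy hypothesis \eqref{eq:And} enters. I would verify the Lenard-Haantjes relations by direct computation: since $\rd H=\sum_i\nu_i\,\rd J_i$, the diagonal action gives $\boldsymbol{K}_\alpha^T\rd H=\sum_i\nu_i^{(\alpha+1)}\rd J_i=\rd H_{\alpha+1}$, so the given integrals form a chain generated by $H$. The role of nondegeneracy is to keep the construction non-trivial throughout the neighborhood: it makes the frequency map a local diffeomorphism, which, together with the functional independence of the integrals, keeps the eigenvalue-ratio matrix $(\nu_i^{(\alpha+1)}/\nu_i)$ of maximal rank on the open set where $\nu_i\neq0$, so that the transition matrix of Definition \ref{defi:LHc} is invertible and the resulting $\omega\mathcal{H}$ structure is maximal.

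The main obstacle, I expect, is not the algebra of the axioms, which reduces to the diagonal computation above, but the regularity bookkeeping in the converse: pinning down the precise open subset of the tubular neighborhood on which all the $\nu_i$ are nonzero and the frequency matrix remains nonsingular, checking that the coefficients $\nu_i^{(\alpha+1)}/\nu_i$ descend to single-valued functions of $\boldsymbol{J}$ so that \eqref{eq:LAA} defines genuine tensor fields despite the multivaluedness of the angles, and making explicit how \eqref{eq:And} secures all of this at once.
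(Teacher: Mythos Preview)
Your proposal is correct and follows essentially the same approach as the paper's own proof: the direct implication is obtained from the skew-symmetry of $\tilde{\boldsymbol{K}}_{i-1}\boldsymbol{P}\,\tilde{\boldsymbol{K}}_{j-1}^T$ via Proposition~\ref{pr:ss}, and the converse is verified by writing down the diagonal operators \eqref{eq:LAA} in action-angle coordinates and checking the axioms of Definition~\ref{def:HM} together with the chain relations. Your presentation is slightly more explicit in two places---spelling out why Proposition~\ref{pr:ss} extends from the $\boldsymbol{K}_\alpha$ to their module, and flagging the regularity caveats (nonvanishing of the $\nu_i$, single-valuedness in $\boldsymbol{J}$)---but these are refinements of the same argument rather than a different route.
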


\begin{proof}
 To prove the first statement, by virtue of  the classical Liouville-Arnold theorem, it is sufficient to note that the functions $H_\alpha$ belonging to a Haantjes chain are in involution w.r.t. the Poisson bracket defined by the symplectic form $\omega$, thanks to Lemma \ref{lm:MHchainInv}.

\par
Let us prove the converse statement. The integrals of motion $\{H_1,\ldots,H_n\}$
 are all assumed to be independent smooth functions on an open dense subset of the phase space, in involution among each others and with $H$. Due to the celebrated Arnold theorem \cite{AKN}, the $2n$-dimensional phase space is foliated by leaves whose compact connected components are invariant tori. Also, at least in any tubular neighbourhood of each torus, there exists a set of action-angle (AA) variables $\{( J_i,\phi_i)\}$, such that the symplectic 2-form reads
\beq \label{eq:omAA}
\omega=\sum_{i=1} ^n\rd J_i \wedge \rd \phi_i \ .
\eeq
Owing to  condition \eqref{eq:And},  the set $\{H_1,\ldots,H_n\}$ depends  on the action variables only  \cite{AKN}. Then, the functions $H_{\alpha}$  take  the generic form
\beq \label{eq:HJ}
H_\alpha=H_\alpha(\boldsymbol{J}), \qquad \alpha=1,\ldots,n.
\eeq
With these data, we shall construct a semisimple and Abelian  $\omega \mathscr{H}$  structure associated with the given integrable system.

As a basis of the Haantjes  algebra we wish to construct, we can take  the following \emph{diagonal} operators in the  action-angle coordinates
\beq \label{eq:Lkdiag}
\boldsymbol{K}_\alpha=\sum _{i=1}^n l_i^{(\alpha) }(\boldsymbol{J})\left(\frac{\partial}{\partial J_i}\otimes \rd J_i+
 \frac{\partial}{\partial \phi_i}\otimes \rd \phi_i \right) \qquad \alpha=1,\ldots,n\ ,
\eeq
where  $l_i^{(\alpha)}$ are arbitrary smooth functions. They comply with Definition \ref{def:HM} and fulfil the compatibility condition  \eqref{eq:compOmH}.

Moreover, we impose that the integrals of motion $\{H_1,H_2, \ldots,H_n\}$ form a Haantjes chain  generated by $H $, i.e.
\beq  \label{chaincond}
\boldsymbol{K}^{T}_{\alpha}dH=dH_{\alpha} \ , \qquad \alpha=1,\ldots,n\ .
\eeq
Being $\boldsymbol{K}_\alpha$ diagonal in the AA variables,  such conditions are equivalent to the following system of $2n$ \emph{algebraic} equations in the $n$ indeterminate functions: $l_i^{(\alpha)}$
\beq \label{eq:lJ}
l^{(\alpha)}_i \frac{\partial H}{\partial J_i}=  \frac{\partial H_{\alpha}}{\partial J_i}\ , \qquad \qquad i=1,\ldots,n.
\eeq
\beq\label{eq:lPhi}
 l^{(\alpha)}_i \frac{\partial H}{\partial \phi_i}=  \frac{\partial H_{\alpha}}{\partial \phi_i}\  , \qquad\qquad i=1,\ldots,n.
\eeq
Obviously, Eqs. \eqref{eq:lPhi} are trivially satisfied, so that only Eqs. \eqref{eq:lJ} have to be taken into account. Without loss of generality, we assume  that $\nu_i$, $i=1,\ldots,n$, are non-vanishing functions. Then,  equations \eqref{eq:lJ} imply that
  the eigenvalues of the $\alpha$-th operator must be the ratio between the frequencies associated to the $\alpha$-th integral of motion and to the Hamiltonian, respectively. It is easy to prove that the Haantjes operators so obtained are linearly independent, due to the independence of the integrals of motion.
Consequently,  the  Haantjes  algebra involved in the Haantjes chain \eqref{chaincond}  is generated by the distinguished basis of operators   \eqref{eq:LAA}.
\end{proof}

There is a natural relation between AA variables and DH coordinates in the Haantjes geometry, as clarified by
\begin{proposition}
Any set of AA variables for a completely integrable system is a set of DH coordinates for the $\omega\mathscr{H}$ manifold given by the symplectic form $ \omega$ and the Haantjes diagonal algebra generated by the operators \eqref{eq:LAA}.
\end{proposition}

\begin{remark} \label{rem:SIBH}
The Haantjes operators \eqref{eq:LAA} exist without any restriction on the form of the Hamiltonian function $H$, except for the non-degeneracy condition \eqref{eq:And}. However, if one wishes to construct a Nijenhuis recursion operator
$\boldsymbol{N}$ for $H$, i.e. a Nijenhuis operator that, at the same time, provides a classical Lenard chain
\beq
\rd H_\alpha=(\boldsymbol{N}^T)^{\alpha}\rd H\ \qquad\qquad \alpha=1,\ldots,n,
\eeq
and has  the natural vector fields $\left(\frac{\partial}{\partial J_i}, \frac{\partial}{\partial \phi_i}\right)$ as  eigenvectors, then the Hamiltonian function must take necessarily the separated  form
\beq \label{eq:HBH}
H(\boldsymbol{J})=\sum_{k=1}^n H_k(J_k) \ ,
\eeq
where $H_k(J_k) $ is a smooth function of the single action variable $J_i$ (see \cite{MM}, \cite{MK}).
\end{remark}

\begin{remark} \label{rem:LieLAA}
The non constant eigenvalues $l_i^{(\alpha)}(\boldsymbol{x})$ of the Haantjes operators \eqref{eq:LAA} $\boldsymbol{K}_\alpha$, $\alpha=1,2,\ldots,n$,  depending only on  action variables, are integrals of motion for the Hamiltonian vector field  $X_H$, i.e. their Lie derivatives along the flow of $X_{H}$ vanishes:
\[
\mathcal{L}_{X_{H}} l_{i}^{(\alpha)}=0.
\]

 However, this property does not imply that the Haantjes operators are invariant along the flow  of $X_H$, as
\beq \label{invKa}
\mathcal{L}_{X_H}\boldsymbol{K}_\alpha=
\sum_{i,k=1}^n \big (l_i^{(\alpha)}-l_k^{(\alpha)}\big )\frac{\partial \nu_i}{\partial J_k} \
\frac{\partial}{\partial \phi_i}\otimes \rd J_k \ .
\eeq
\end{remark}
\subsection {The analysis of Brouzet}
 In \cite{BrI},   R. Brouzet  studied the existence  of a Nijenhuis \textit{recursion} operator for   a completely integrable system, that is  a Nijenhuis operator \emph{compatible} with $\omega$ and fulfilling the requirement
\[
\mathcal{L}_{X_{H}}(\boldsymbol{N})=0\ .
\]
Notice that this requirement is not satisfied by the Haantjes operators \eqref{eq:LAA}, according to Eq. \eqref{invKa}.
Brouzet proved that the  existence of a Nijenhuis recursion  operator for $X_H$ in a tubular neighbourhood  of a Liouville torus implies very strong restrictions on the form of its Hamiltonian function. Accordingly, he presented an example of an integrable system with two degrees of freedom that \textit{does not admit} a recursion operator compatible with the original symplectic structure.  Here we show that such example does admit a simple formulation  in the context of the $\omega \mathscr{H}$ geometry.

In his analysis, Brouzet considered the symplectic manifold $M= \mathbb{R}^2\times \mathbb{T}^2$, with the action variables
${(J_1,J_2)}\in  \mathbb{R}^2 $, the angles ${(\phi_1,\phi_2)} $  on the bi-dimensional torus $ \mathbb{T}^2 $, and the Hamiltonian function
\beq \label{eq:HBr}
H=J_1(1+J_2^2)\ ,
\eeq
which is not of the form \eqref{eq:HBH} and  is non degenerate in the  dense open submanifold  $M':=\{m\in M: J_2\neq 0\}$.
The corresponding Hamiltonian vector field
\beq
X_H=(1+J_2^2)\frac{\partial}{\partial \phi_1}+2 J_1 J_2 \frac{\partial}{\partial \phi_2}
\eeq
is completely integrable, since  any smooth function depending only on the action variables is an integral of motion for it. For instance, let us take
\beq
H_1=J_1\ , \qquad \qquad H_2=J_2 \ ,
\eeq
which on $M'$ are functionally independent among each others.
One can easily verify that the two Hamiltonian functions in involution $\{H_1,H_2\}$ are the potential functions of  a Haantjes chain w.r.t. the $\omega \mathscr{H}$ structure given by the standard symplectic form

\beq
\omega=\rd J_1 \wedge \rd \phi_1+\rd J_2 \wedge \rd \phi_2
\eeq
and by the Haantjes operators
\beq
\boldsymbol{K}_1=\frac{1}{(1+J_2^2)} \left(\frac{\partial}{\partial J_1}\otimes \rd J_1+
\frac{\partial}{\partial \phi_1}\otimes \rd \phi_1 \right )\ , \quad
\boldsymbol{K}_2=\frac{1}{2 J_1J_2 }\left(\frac{\partial}{\partial J_2}\otimes \rd J_2 +
\frac{\partial}{\partial \phi_2}\otimes \rd \phi_2 \right )\ ,
\eeq
which are constructed in the  open submanifold of $M'$ where $J_1\neq 0$ according to  the formulae \eqref{eq:LAA}.

 It is interesting to observe that the authors of \cite{MLV} have by-passed the Brouzet obstruction to the existence of a Njenhuis recursion operator for the Hamiltonian \eqref{eq:HBr} (and for other examples presented in \cite{BrII}) by using a different strategy. Their approach consists in looking for a Njenhuis recursion operator compatible with a symplectic structure \textit{alternative} to the original one. Instead, in our theory, the Haantjes operators are compatible with the original symplectic structure.

\section{Integrable models, wave equation and Haantjes geometry}

Given the  equivalence between complete integrability of a Hamiltonian system and the existence of an associated  $\omega \mathscr{H}$ structure, we can use this equivalence in two ways: to construct integrable models from a given Haantjes geometry (the \emph{direct problem}) or, conversely, to
 determine the Haantjes geometry of a given integrable system (the \emph{inverse problem}).
In this section, we will adopt the first point of view, in order to show the flexibility of the Haantjes approach in applicative contexts. We consider the simplest case of a manifold $M$ of dimension $4$ and of a Haantjes algebra of class $2$, whose basis is  $\{\boldsymbol{I},\boldsymbol{K}\}$  for a suitable $\boldsymbol{K}$.   Indeed, by searching for Haantjes chains w.r.t.  such a  distinguished basis, we are able to define families of associated integrable models.

Precisely, we will show that by means of the Haantjes geometry, solutions of the two-dimensional wave equation can be used to define new integrable systems.
\begin{theorem}
Let $\xi=\frac{x+y}{\sqrt{2}}$, $\eta=\frac{x-y}{\sqrt{2}}$, $p_\xi=\frac{p_x+p_y}{\sqrt{2}}$, $p_\eta=\frac{p_x-p_y}{\sqrt{2}}$ be characteristic coordinates and momenta in an open set of $M$. The Hamiltonian
\beq
H=H_1(\xi,\eta,p_{\xi},p_{\eta})=f(\eta)+g(\xi)+F(p_{\eta})+ G(p_{\xi})
\eeq
where $f$, $g$, $F$, $G$ are arbitrary functions of their arguments, is integrable and admits the first integral of motion
\beq
H_2(\xi,\eta,p_{\xi},p_{\eta})=-f(\eta)+g(\xi)-F(p_{\eta})+ G(p_{\xi}).
\eeq
\end{theorem}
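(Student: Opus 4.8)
The plan is to mirror the proof of the preceding theorem on analytic functions, replacing the complex (rotation) structure used there by its hyperbolic counterpart: a \emph{product} structure adapted to the characteristic coordinates. Concretely, I would introduce the uniform operator
\[
\boldsymbol{K}=\frac{\partial}{\partial \xi}\otimes \rd \xi-\frac{\partial}{\partial \eta}\otimes \rd \eta+\frac{\partial}{\partial p_{\xi}}\otimes \rd p_{\xi}-\frac{\partial}{\partial p_{\eta}}\otimes \rd p_{\eta},
\]
which is diagonal with constant eigenvalues $\pm 1$ and satisfies $\boldsymbol{K}^2=\boldsymbol{I}$. Since it is diagonal in a coordinate chart, Proposition \ref{pr:Hd} gives $\mathcal{H}_{\boldsymbol{K}}=0$, and Proposition \ref{pr:fL} then yields $\mathcal{H}_{f\boldsymbol{I}+g\boldsymbol{K}}=g^4\mathcal{H}_{\boldsymbol{K}}=0$, so that $(M,\omega,\boldsymbol{I},\boldsymbol{K})$ is a candidate $\omega\mathcal{H}$ structure, the hyperbolic analogue of the elliptic operator \eqref{eq:analytic}.

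Next I would verify compatibility with $\omega$. Because the passage from $(x,y,p_x,p_y)$ to $(\xi,\eta,p_{\xi},p_{\eta})$ is an orthogonal, hence canonical, transformation, the symplectic form keeps its Darboux shape $\omega=\rd \xi\wedge \rd p_{\xi}+\rd \eta\wedge \rd p_{\eta}$. A direct check on the basis $\{\partial_\xi,\partial_\eta,\partial_{p_\xi},\partial_{p_\eta}\}$ then shows that $\boldsymbol{\Omega}\boldsymbol{K}$ is skew symmetric, i.e. $\boldsymbol{K}^T\boldsymbol{\Omega}=\boldsymbol{\Omega}\boldsymbol{K}$, so condition \eqref{eq:LOcomp} holds. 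Together with the trivial commutativity with $\boldsymbol{K}_0=\boldsymbol{I}$ and the module property just noted, this confirms that $(M,\omega,\boldsymbol{I},\boldsymbol{K})$ is a genuine $\omega\mathcal{H}$ structure.

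The core computation is the Lenard-Haantjes chain. Writing $\rd H_1=f'(\eta)\rd \eta+g'(\xi)\rd \xi+F'(p_\eta)\rd p_\eta+G'(p_\xi)\rd p_\xi$ and applying $\boldsymbol{K}^T$, which multiplies the $\xi$- and $p_\xi$-components by $+1$ and the $\eta$- and $p_\eta$-components by $-1$, one obtains precisely $\rd H_2$; hence $\boldsymbol{K}^T\rd H_1=\rd H_2$, so $(H_1,H_2)$ is a Lenard-Haantjes chain generated by $H_1$. By the forward implication of Theorem \ref{th:HAL}, the two functions are then in involution and the foliation they define is Lagrangian, whence $H_1$ is integrable by quadratures and $H_2$ is a first integral. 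As an independent check, $\{H_1,H_2\}=0$ can be read off directly, the $\xi$-sector and $\eta$-sector contributions cancelling separately.

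I do not anticipate a serious obstacle: once the correct operator is guessed, every verification is a short constant-coefficient computation. The one conceptual point worth isolating is the structural dichotomy with the previous theorem: the analytic (elliptic) case is governed by a complex structure with $\boldsymbol{K}^2=-\boldsymbol{I}$, producing Cauchy-Riemann relations, whereas the wave (hyperbolic) case is governed by a product structure with $\boldsymbol{K}^2=+\boldsymbol{I}$, producing the decoupled d'Alembert-type splitting $H_{1,2}=\pm f+g\pm F+G$ in the characteristic variables. The only place where care is genuinely required is ensuring that the sign pattern of $\boldsymbol{K}$ on the four coordinate directions reproduces exactly the sign pattern of $H_2$.
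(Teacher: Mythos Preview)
Your proof is correct and is essentially the paper's own argument: the operator you write down in characteristic coordinates is precisely the paper's Cartesian operator $\boldsymbol{K}=\partial_x\otimes\rd y+\partial_y\otimes\rd x+\partial_{p_x}\otimes\rd p_y+\partial_{p_y}\otimes\rd p_x$ viewed in its diagonalizing chart, and both proofs conclude via the Lenard--Haantjes chain $\boldsymbol{K}^T\rd H_1=\rd H_2$. The only presentational difference is that the paper, working in $(x,y,p_x,p_y)$, first derives from the chain the wave equations $H_{i,xx}-H_{i,yy}=0$ and $H_{i,p_xp_x}-H_{i,p_yp_y}=0$ (motivating the section's title) and then passes to characteristic coordinates, whereas you start there and verify the chain directly.
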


\begin{proof}
Consider the uniform Haantjes operator in Cartesian coordinates and momenta
\begin{equation} \label{eq:Wave}
\boldsymbol{K}= \frac{\partial}{\partial x}\otimes \rd y+\frac{\partial}{\partial y}\otimes \rd x+ \frac{\partial}{\partial p_x}\otimes \rd p_y+\frac{\partial}{\partial p_y}\otimes \rd p_x
 \ .
\end{equation}
We construct the Haantjes chain
\beq
\boldsymbol{K}^T dH= dH_2.
\eeq
This chain is defined by the differential relations
\bea
\begin{cases} \frac{\partial H_1}{\partial {p_y}}= \frac{\partial H_2}{\partial {p_x}},\\
 \frac{\partial H_1}{\partial {p_x}}=\frac{\partial H_2}{\partial {p_y}},
\end{cases} \qquad\qquad
\begin{cases} \frac{\partial H_1}{\partial y}= \frac{\partial H_2}{\partial x},\\
 \frac{\partial H_1}{\partial x}=\frac{\partial H_2}{\partial y} \ .
\end{cases}
\eea
These equations can be combined to define the wave equations
\beq
H_{i,p_x p_x}-H_{i,p_y p_y}=0, \qquad \qquad H_{i,xx}-H_{i,yy}=0, \quad i=1,2 \ .
\eeq
Therefore the Hamiltonian functions
\beq
H_1(x,y,p_x,p_y)=F(p_x-p_y)+G(p_x+p_y)+f(x-y)+g(x+y)
\eeq
and
\beq
H_2(x,y,p_x,p_y)=-F(p_x-p_y)+G(p_x+p_y)-f(x-y)+g(x+y) \ ,
\eeq
where $F,G,f,g$ are arbitrary smooth functions of their arguments, define a completely integrable system, separable in the coordinates $(\xi, \eta, p_{\xi}, p_{\eta})$.
\end{proof}
\begin{example}
Choosing the functions $F,G,f,g$ as a power of their arguments, we get the interesting family of models
\begin{eqnarray}
H_1&=&(p_x-p_y)^n +(p_x+p_y)^n+(x-y)^m+(x+y)^m  \ ,\quad n,m\in \mathbb{N}\\
H_2&=&-(p_x-p_y)^n +(p_x+p_y)^n-(x-y)^m+(x+y)^m \ .
\end{eqnarray}
For $n=2$, the  Hamiltonian function $H_1$ is quadratic in the momenta and corresponds to a class of separable systems that have been discussed in \cite{Pere} (page 81).
In particular, for $n=2,m=3$ one obtains the Sawada-Kotera  system \cite{AzSa}. For $n>2$ we have, to the best of our knowledge, a new family of  integrable systems.
\end{example}
The direct method outlined in this section can be widely adopted to generate new models from known Haantjes operators. However, an exhaustive analysis of this approach is out of the scopes of the present work.

\section{The inverse problem for  systems with two degrees of freedom} \label{Sec:Proc}
In this Section, we deal  with the inverse problem. More precisely, given a set of independent functions  in involution, we  will construct by means of them a Haantjes algebra \textit{compatible with the symplectic form}. In other words, we shall determine a Haantjes algebra for an  assigned integrable system, represented in physical variables.
\subsection{A  general procedure} \label{SubSec:Proc}
Let us consider the simplest case of Hamiltonian systems with two degrees of freedom. We propose a general procedure allowing us to determine   a   Haantjes algebra with identity and of class $2$, associated to the Haantjes chain formed by the differentials of the two given  integrals of motion.
 We search for a generator of such Haantjes algebra, that is to say a Haantjes operator $\boldsymbol{L}$ whose minimal polynomial is of degree two (namely, the maximum  degree allowed by our assumptions):
\begin{equation}\label{eq:minpol2}
m_{\boldsymbol{L}}(\boldsymbol{x},\lambda):= \lambda^2-c_1(\boldsymbol{x}) \lambda-c_2(\boldsymbol{x})  \ ,
\end{equation}
where $c_1(\boldsymbol{x}) =\frac{1}{2} Trace(\boldsymbol{L})$, $c_2(\boldsymbol{x}
)=-\sqrt{ det( \boldsymbol{L} )}$.
Let us note that such a requirement does not imply the semisimplicity of $\boldsymbol{L}$ (unless the existence of two real, \emph{distinct} roots of $m_{\boldsymbol{L}}(\boldsymbol{x}, \lambda)$ is also assumed).
\par
\begin{remark} \label{rem:Hmod2gl}
 In the case $n=2$,  any non-isotropic (that is   $\boldsymbol{L}_1\neq l(\boldsymbol{x})\boldsymbol{I}$) Haantjes  operator,  compatible with
the symplectic form, is a generator of the cyclic Haantjes algebra $\mathcal{L}(\boldsymbol{L}_1)=Span\{\boldsymbol{I},\boldsymbol{L}_1\}$. Besides, any other generator of $\mathcal{L}(\boldsymbol{L}_1)$ has the form

 \begin{equation}
\boldsymbol{L}_2=f \boldsymbol{I}+g\boldsymbol{L}_1 \ ,
 \end{equation}
 where $f$ and $g$ are arbitrary smooth functions, with $g$ nowhere vanishing. In fact,
\begin{equation}
det(\boldsymbol{L}_2-\lambda\boldsymbol{I})=det(f \boldsymbol{I}+g\boldsymbol{L}_1-\lambda\boldsymbol{I})=g^n\,det\left(\boldsymbol{L}_1-\frac{\lambda-f}{g}\boldsymbol{I}\right) \ .
\end{equation}
Therefore, the eigenvalues $\lambda_i^{(1)}$ of $\boldsymbol{L}_1$ and $\lambda_i^{(2)}$ of $\boldsymbol{L}_2$ are related by the affine equations
\begin{equation} \label{eq:Nautov2gl}
\lambda_i^{(2)}=f+g\, \lambda_i^{(1)} \qquad i=1,2 \ ;
\end{equation}
consequently, $\lambda_1^{(1)}=\lambda_2^{(1)} \Leftrightarrow \lambda_1^{(2)}= \lambda_2^{(2)}$. Then, we can conclude that the Haantjes algebra (even if non-semisimple) contains a maximal generator if and only if all of its generators are maximal.
\end{remark}

 The procedure can be sketched as follows.
%\begin{enumerate}
%\item
Given two independent integrals of motion in involution $\{H_1=H, H_2\}$, we look for an operator $\boldsymbol{L}$  with the following properties:
\\
\begin{itemize}
\item[i)]
it is compatible with the symplectic form
\begin{equation}
 \label{eq:LOm2gl}
\boldsymbol{L}^T\boldsymbol{\Omega}=\boldsymbol{\Omega}\,\boldsymbol{L} \ ;
%\label{eq:Llast}
%\boldsymbol{K}^2 -c_1 \boldsymbol{K}-c_2\boldsymbol{I}&=0\\
\end{equation}
\item[ii)]
it provides us with a Haantjes chain for the integrals
\begin{equation}
\label{eq:Hc2gl}
\boldsymbol{L}^T \rd H =\rd H_2 \ ;
\end{equation}
\item[iii)]
it is a Haantjes operator
\begin{equation}
\label{eq:HaanNull2gl}
\mathcal{H}_{\boldsymbol{L}}(X, Y)= \boldsymbol{0} \qquad \forall X, Y \in TM \ .
\end{equation}
\end{itemize}
The algebraic compatibility condition \eqref{eq:LOm2gl} is very strong: chosen a set of Darboux coordinates, it  allows us to reduce the number of unknown components  of the operator $\boldsymbol{L}$  from $16$ to $6$. We obtain that it must have the form
\begin{equation} \label{eq:L2g}
\boldsymbol{L}=
\left[
\begin{array}{cc|cc}
 l_{1} ^1& l_{2}^1&0&l_{4}^1 \\
     l_{1}^2& l_{2}^2&-l_{4}^1 &0 \\
     \hline
      0&l_{2}^3&l_{1}^1&l_{1}^2 \\
    -l_{2} ^3& 0 & l_{2} ^1& l_{2} ^2\\
      \end{array}
      \right ] \ ,
\end{equation}
where $l_{j}^i$ are unknown arbitrary  functions on $M$. Note that the form \eqref{eq:L2g} for $\boldsymbol{L}$ guarantees that condition \eqref{eq:minpol2} is satisfied.
The relations \eqref{eq:Hc2gl}, still algebraic, provide us with a system of $4$ algebraic equations in the  $6$ unknown functions; it turns out that only $3$ equations are independent; thus we are left with $3$ unknown functions.
 The  vanishing of the Haantjes torsion \eqref{eq:HaanNull2gl}  of $\boldsymbol{L}$  provides us with an over-determined system of $24$ PDEs of first order,   which can be managed with some suitable ans\"atze.  For instance,  some homogeneity properties for the components of $\boldsymbol{L}$ can be assumed.
 \par
\vspace{2mm}

\subsection{On the superintegrable Post-Winternitz system}\label{Sec:PW}
By means of the procedure described above,
we can discuss now the inverse problem for  a system which recently has attracted much attention: the Post-Winternitz (PW) system \cite{PW}. Indeed,  it is a  maximally \emph{superintegrable} system \cite{MPW} with  integrals of motion  cubic and quartic in the momenta. As a consequence of maximal superintegrability, its bounded orbits are closed and periodic. Thus, as any superintegrable system, it does not fulfil the non degeneracy condition \eqref{eq:And}, so that Theorem \ref{th:HAL} cannot be applied.  Despite  its regularity properties, separation variables for the PW system  are not known. Since it does not belong to the classical St\"ackel  family of Hamiltonian functions quadratic in the momenta, the PW system is certainly not separable by  an extended point transformation.

Let us consider a set of canonical coordinates $(x,y,p_x,p_y)$; the Hamiltonian system with Hamiltonian function
\begin{equation} \label{eq:HPWfis}
H=H_1=\frac{1}{2}(p_x^2+p_y^2)+a \frac{x}{y^\frac{2}{3}} \qquad a\in \mathbb{R}\ , \qquad y\neq 0 \ ,
\end{equation}
admits  the two independent integrals of motion
\begin{equation} \label{eq:H2PWfis}
H_2=2p_x^3+3p_y^2p_x +a\left(9 {y^\frac{1}{3}} p_y +6 \frac{x}{y^\frac{2}{3}}p_x \right)
\end{equation}
and
\begin{equation} \label{eq:H3PWfis}
H_3=p_y^4-12 a  y^\frac{1}{3}p_x p_y +4 a \frac{x}{y^\frac{2}{3}}p_y^2-2 a^2 \left(9y^{2/3}-\frac{2x^2}{y^\frac{4}{3}}\right) \ .
\end{equation}
We shall prove that these integrals form the two different Haantjes chains $\{\rd H_1,\rd H_2\}$ and $\{\rd H_1,\rd H_3\}$; each of them is sufficient to guarantee the complete integrability of the PW system.

By performing the extended-point canonical transformation
\begin{equation}\label{eq:tcPW}
q_1=y^\frac{1}{3} \ , \quad q_2=\frac{x}{y^\frac{2}{3}} \ , \quad
 p_1=2\frac{x}{y^\frac{1}{3}}p_x+3y^\frac{2}{3}p_y\ , \quad p_2=y^\frac{2}{3}p_x \ ,
\end{equation}
we reduce the Hamiltonian functions to a rational form; from it we infer the weights of the  three  components of \eqref{eq:L2g} (still unknown after having imposed the conditions \eqref{eq:Hc2gl}).
As a result of the previous approach, we get the $\omega\mathscr{H}$  manifold $(\omega,\boldsymbol{I},\boldsymbol{K}_{2}^{(PW)}   )$  associated with the Haantjes chain $\{\rd H,\rd H_2\}$, where
\begin{equation} \label{eq:L2}
\boldsymbol{K}_{2}^{(PW)}=3
\left[\begin{array}{cc|cc}
2p_x& p_y & 0 & 3y \\
0 &2p_x &-3y& 0 \\ \hline
0 & 0 & 2p_x&0\\
0 & 0 &  p_y  & 2p_x
\end{array}
\right] \ ,
\end{equation}
%is the representation of the Haantjes operator $\boldsymbol{K}_{2}^{(PW)}$ 
once written in the original coordinates $(x,y,p_x,p_y)$. Similarly,  we obtain a second $\omega\mathscr{H}$  structure $(\omega,\boldsymbol{I},\boldsymbol{K}_{3}^{(PW)}   )$ associated with the Haantjes chain $\{\rd H,\rd H_3\}$, where
\begin{equation} \label{eq:L3}
\boldsymbol{K}_{3}^{(PW)}=4
\left[
\begin{array}{cc|cc}
p_y^2+2a\frac{x}{y^{2/3}}& -(p_x p_y+3ay^{1/3} )& 0 & -3y p_x \\
0&p_y^2+2a\frac{x}{y^{2/3}} &3yp_x& 0 \\ \hline
0 & 0 & p_y^2+2a\frac{x}{y^{2/3}} &0\\
0 & 0 &-(p_x p_y+3ay^{1/3} )  &p_y^2+2a\frac{x}{y^{2/3}}
\end{array}
\right] \ .
\end{equation}

Although both $\boldsymbol{K}_{2}^{(PW)}$ and $\boldsymbol{K}_{3}^{(PW)}$ have a minimal polynomial of degree $2$, they are not semisimple, since each of them possesses a sigle eigenvalue of algebraic multiplicity equal to $4$, with two proper eigenvectors and two generalized eigenvectors.

Moreover, neither of the two non-semisimple Haantjes algebras generated by $\{ \boldsymbol{I},\boldsymbol{K}_{2}^{(PW)}\}$ and $\{ \boldsymbol{I},\boldsymbol{K}_{3}^{(PW)}\}$ admits a maximal generator, by Remark \ref{rem:Hmod2gl}. However, their two Haantjes chains assure the superintegrability of the PW model. Thus, the existence of  $\boldsymbol{K}_{2}^{(PW)}$ and $\boldsymbol{K}_{3}^{(PW)}$ shows that the Haantjes theory can be naturally applied  to non-St\"ackel systems which do not possess any  evident separability structure, even when they do not satisfy the non-degeneracy condition \eqref{eq:And}. Furthermore, one can verify that the set  $\{ \boldsymbol{I},\boldsymbol{K}_{2}^{(PW)}, \boldsymbol{K}_{3}^{(PW)}\}$ is still a basis of a Haantjes algebra. However, this algebra is non-Abelian, therefore it does not admit a generator. Thus, the PW system is a remarkable example of a superintegrable system admitting a non Abelian  $\omega\mathscr{H}$ structure of rank $3$.

\begin{remark}
\noi Once the algebra $\mathscr{H}$ associated with an integrable system is determined, assuming that it is Abelian a theorem proved in \cite{TT2017} guarantees that there exists a local chart where all Haantjes operators can be put in diagonal form (if $\mathscr{H}$ is semisimple) or in a block-diagonal form (if it is not). This fact is crucial in order to find separation variables \cite{RTT2020}, whenever they exist, or, more generally, to study partial separability \cite{Stackel1897}, \cite{CR2019}. 
\end{remark}
\begin{remark}
The procedure described above can also be applied to the case when an Hamiltonian function is given, but no integrals of motion are known. In this situation, Eq. \eqref{eq:Hc2gl} for the Haantjes chain can be used to construct, in principle, both the integrals of motion (if they exist) and the corresponding Haantjes operators.
\end{remark}

\section{The stationary reduction of the seventh-order KdV flow revisited}
In this section, in order  to show the large range of applicability of the theory previously developed, we shall discuss an important example of Hamiltonian integrable system, defined on a six-dimensional symplectic manifold, which is obtained as a stationary reduction of the  seventh-order equation  of the Korteweg de Vries (KdV) hierarchy.

In \cite{TKdV}, a method to obtain the  Poisson pencil   $P_1-\lambda P_0$ of the stationary flows of the KdV hierarchy was presented. In \cite{MT}, this method was applied to construct the stationary reduction of the  seventh-order equation of the hierarchy.   The restricted Poisson pencil turns out to be a degenerate pencil of co-rank one in a seven dimensional manifold $\mathcal{M}^{(7)}$, being therefore  a Gelfand-Zakarevich system \cite{GZ}. It possesses a polynomial Casimir function of length four, starting with a Casimir of $P_0$ and ending with a Casimir of $P_1$. Then,  a Marsden-Ratiu reduction procedure \cite{MR},   similar to
the one used in other  cases \cite{TKdV,MTlt,MTltC}, was performed   to each
six-dimensional symplectic leaf   $S_0$ of the Poisson tensor $P_0$, in order to get   rid of the Casimir  of $P_0$.

\noi Furthermore, by restricting the  polynomial Casimir function to $S_0$, one of the present authors  was able to obtain in \cite{MT}
three Hamiltonian functions in involution which in the Darboux chart $(q_1,q_2,q_3,p_1,p_2,p_3)$ read
\begin{align}\label{eq:HKdV7}
H=H_1&=p_1p_2+\frac{1}{2}p_3^2-\frac{5}{8}q_1^4+\frac{5}{2}q_1^2q_2+\frac{1}{2}q_1q_3^2-\frac{1}{2}q_2^2, \\
\nonumber
H_2&=\frac{1}{2}p_1^2
+p_1p_2q_1+p_3^2q_1-p_2^2q_2-p_2p_3q_3-\frac{1}{2}q_1^5-\frac{1}{4}q_1^2q_3^2+\frac{1}{2}q_2q_3^2+2q_1q_2^2, \\
\nonumber
H_3&=\frac{1}{2}p_3^2q_1^2+p_3^2q_2-p_1p_3q_3-p_2p_3q_1q_3+\frac{1}{2}p_2^2q_3^2+\frac{1}{2}q_1^3q_3^2-q_1q_2q_3^2-\frac{1}{8}q_3^4.
\end{align}

 However, as typically happens in the case of Gelfand-Zakarevich systems,  the reduced integrable Hamiltonian  systems on $S_0$ do not possess a bi-Hamiltonian description but a $\omega N$ one \cite{FP}.
 Nevertheless, they can also be described in the context of our new theory.
In  fact, we search for a generator $\boldsymbol{L}$ of a cyclic Haantjes   algebra $\mathscr{H}$ of rank $3$, therefore with the minimal polynomial  of degree $3$:
\begin{equation}
m_{\boldsymbol{L}}(\boldsymbol{x},\lambda):= \lambda^3-c_1(\boldsymbol x) \lambda^2-c_2(\boldsymbol x)\lambda-c_3(\boldsymbol x) \ .
\end{equation}
To this aim, we follow a procedure whose first step is analogous to the one performed in two degrees of freedom. We  look for an operator $\boldsymbol{K}_2$ that satisfies
\begin{align}
 \label{eq:LOm3gl}
\boldsymbol{K}_2^T\boldsymbol{\Omega}&=\boldsymbol{\Omega}\,\boldsymbol{K}_2 \ ,\\
\label{eq:Hc3gl}
\boldsymbol{K}_2^T\rd H &=\rd H_2 \ ,\\
\label{eq:HaanNull3gl}
\mathcal{H}_{\boldsymbol{K}_2}(X, Y)&= \boldsymbol{0} \qquad \forall X, Y \in TM \ .
\end{align}
Condition \eqref{eq:LOm3gl} allows us to reduce the unknown components  of the  operator $\boldsymbol{K}_2$ from $36$ to $15$.
Under the simplest ansatz  that the remaining elements of  $\boldsymbol{K}_2$ are \emph{linear} in the Darboux coordinates $(q_1,q_2,q_3,p_1,p_2,p_3)$,
we find the following, unique solution of Eqs. \eqref{eq:LOm3gl}, \eqref{eq:Hc3gl} and \eqref{eq:HaanNull3gl}:
\begin{equation} \label{eq:L1K}
\boldsymbol{K}_2=
\left[
\begin{array}{ccc|ccc}
    q_1 &1&0&0 &0&0\\
 -2q_2&q_1&-q_3&0& 0&0\\
 -q_3&0&2q_1&0&0&0\\
  \hline
    0& -p_2\ & -p_3\ &q_1\ &-2q_2\ &-q_3\\
     p_2&0 & 0& 1&q_1&0\\
   p_3&0&0 &0&-q_3&2q_1
      \end{array}
      \right ]\ .
\end{equation}
Since  $\boldsymbol{K}_2$  is a maximal semisimple Haantjes operator in the points where the discriminant $\Delta$ of its minimal polynomial 
\begin{equation}
\Delta=-(8\, q_1^4q_2+4\,q_1^3q_3^2+32\,q_1^2q_2^2+72\,q_1q_2q_3^2+27\,q_3^4+32\,q_2^3)
\end{equation}
is positive, then, by virtue of Proposition \ref{pr:Hmg},  it generates a cyclic Haantjes algebra $\mathscr{H}$. Thus, we search for another Haantjes operator $\boldsymbol{K}_{3}$ such that
\begin{align}
\label{eq:KdVK2}
\boldsymbol{K}_3&=f \boldsymbol{I}+g\boldsymbol{L} +h\boldsymbol{L}^2 \ , \quad
\boldsymbol{L}=\boldsymbol{K}_2 \ ,\\
\label{eq:KdVLHc}
\boldsymbol{K}_3^T\rd H&=\rd H_3 \ ,
\end{align}
where $f$, $g$, $h$ are suitable smooth functions on $M$ to be determined. The unique solution is
  $\boldsymbol{K}_3=(q_1^2+2q_2)\,\boldsymbol{I}-2q_1\boldsymbol{L}+\boldsymbol{L}^2$, therefore
\begin{equation} \label{eq:L2K}
\boldsymbol{K}_{3}=
\left[
\begin{array}{ccc|ccc}
0 &0&-q_3&0&0&0\\
   q_3^2&0&-q_1q_3&0&0&0\\
    -q_1q_3\ &-q_3\ &q_1^2+2q_2&0 &0&0\\
    \hline
 0& 0&p_2q_3-p_3q_1&0&q_3^2&-q_1q_3 \\
  0& 0& -p_3& 0 &0&-q_3\\
  -(p_2q_3-p_3q_1) \ &p_3&0 &  -q_3\ & -q_1q_3\ & q_1^2+2q_2
      \end{array}
      \right] \ .
\end{equation}
Thus, $\{\boldsymbol{K}_1=\boldsymbol{I}, \boldsymbol{K}_2, \boldsymbol{K}_3\}$ is a distinguished basis of the cyclic Abelian Haantjes algebra of rank $3$ generated by $ \boldsymbol{L}=\boldsymbol{K}_2$, which provides us with the Haantjes chain $\{ \rd H_1,\rd H_2, \rd H_3\}$.

\section{Future Perspectives}

The extension of the present theory to the case of quantum integrable systems is a nontrivial task. This research line would pave the way to an algebraic interpretation
 of the notion of Haantjes integrability developed here, in terms of infinite-dimensional commuting operators on a Hilbert separable space of quantum states.

Also, it would be interesting to compare the geometric structures underlying the vision offered here with the intrinsic, purely algebraic structures developed in \cite{IMRT}, in the context of \textit{nilpotent integrability}.

We mention that new $\omega\mathscr{H}$ structures have been recently found in Ref. \cite{Giapp},  which is based on an earlier version of this article.  

An in-depth analysis of the case of \textit{superintegrable systems} \cite{TWR}, especially maximally superintegrable ones, has been performed \cite{RTT2020}. Along these lines, we also wish to construct  a generalization of our approach to the study of the geometry of certain classical systems, as the abovementioned Post-Winternitz model of Section \ref{Sec:PW}, that do not possess any known system of separation coordinates. We believe that our theory can offer a proper language in which the study of separability can be formulated and carried out.

\section*{Acknowledgement}

\noi The research of P. T. has been supported by the research project
PGC2018-094898-B-I00, Ministerio de Ciencia, Innovaci\'{o}n y Universidades and Agencia Estatal de Investigaci\'on,
Spain, and by the Severo Ochoa Programme for Centres of Excellence in R\&D
(CEX2019-000904-S), Ministerio de Ciencia, Innovaci\'{o}n y Universidades y Agencia Estatal de Investigaci\'on, Spain. 

\noi  P. T. is a member of Gruppo Nazionale di Fisica Matematica (GNFM) of INDAM.

\end{document}